\documentclass[letterpaper, 10 pt, conference]{ieeeconf}  

\IEEEoverridecommandlockouts             
\usepackage{graphicx,amsmath,amssymb}
\usepackage{kotex}
\usepackage{algorithm}
\usepackage{algorithmic}
\usepackage{tikz}
\usetikzlibrary{arrows.meta, positioning, quotes, calc, shapes.geometric}

\tikzset{
	block/.style = {draw, rectangle,
		minimum height=1cm,
		minimum width=2cm},
	input/.style = {coordinate,node distance=1cm},
	output/.style = {coordinate,node distance=4cm},
	arrow/.style={draw, -latex,node distance=2cm},
	pinstyle/.style = {pin edge={latex-, black,node distance=2cm}},
	sum/.style = {draw, circle, node distance=1cm},
}

\usepackage{amsfonts}
\usepackage{amsmath}
\usepackage{amssymb}

\usepackage{amsthm}

\newtheorem{thm}{\bf Theorem}

\newtheorem{lem}{\bf Lemma}
\newtheorem{cor}{\bf Corollary}

\theoremstyle{definition}
\newtheorem{defn}{\bf Definition}

\newtheorem{quest}{\bf Question}

\theoremstyle{remark}
\newtheorem{rem}{\bf Remark}

\title{\LARGE \bf Steady-state response assignment for a given disturbance and reference: Sylvester equation rather than regulator equations}

\author{Hyeonyeong Jang and Jin Gyu Lee
\thanks{H.~Jang and J.G.~Lee are with ASRI and the Department of Electrical and Computer Engineering, Seoul National University, Korea (email: {\tt\small hyjang@cdsl.kr} and {\tt\small jingyu.lee@snu.ac.kr}). Corresponding author: Jin Gyu Lee.}
}

\begin{document}

\maketitle
\thispagestyle{plain} 
\pagestyle{plain} 

\begin{abstract}
Conventionally, the concept of moment has been primarily employed in model order reduction to approximate system by matching the moment, which is merely the specific set of steady-state responses.
In this paper, we propose a novel design framework that extends this concept from ``moment matching'' for approximation to ``moment assignment'' for the active control of steady-state.
The key observation is that the closed-loop moment of an interconnected linear system can be decomposed into the open-loop moment and a term linearly parameterized by the moment of the compensator.
Based on this observation, we provide necessary and sufficient conditions for the assignability of desired moment and a canonical form of the dynamic compensator, followed by constructive synthesis procedure of compensator.
This covers both output regulation and closed-loop interpolation, and further suggests using only the Sylvester equation, rather than regulator equations.
\end{abstract}


\section{Introduction}

The time domain notion of moment characterizes the input-output behavior of interconnected system on a specific invariant manifold~\cite{astolfi2010model}.
In particular, consider the multi-input-multi-output (MIMO) linear time-invariant (LTI) system described by
\begin{equation*}
\begin{aligned}
\dot x & = Ax + Bu, &
y &= Cx + Du,
\end{aligned}
\end{equation*}
with the state $x\in\mathbb{R}^n$, the input $u\in\mathbb{R}^{m}$, and the output $y\in\mathbb{R}^{p}$;
and the signal generator
\begin{equation*}
\begin{aligned}
\dot \omega & = S\omega, & v &= L\omega,
\end{aligned}
\end{equation*}
with the state $\omega \in \mathbb{R}^\nu$ and the output $v\in\mathbb{R}^{m}$.
The injection of $v$ into $u$ yields\ open-loop interconnected system
\begin{equation*}
\begin{aligned}
\dot \omega &= S\omega \\
\dot x &= Ax + BL\omega\\
y &= Cx + DL\omega.
\end{aligned}
\end{equation*}
If the spectra of A and S are disjoint, then the interconnected system possesses the invariant manifold
\begin{equation*}
\Upsilon_{x\omega} = \{ (x,\omega) \in \mathbb{R}^{n + \nu} \mid x = \Pi \omega \}
\end{equation*}
where $\Pi$ is the unique solution of the Sylvester equation
\begin{equation*}
\Pi S = A\Pi + BL.
\end{equation*}
Indeed, on this invariant manifold, the input($\omega$)-output($y$) behavior $y = (C\Pi + DL)\omega$ can be abstracted into a single linear operator $C\Pi + DL$, which we call \emph{moment}.
If, in addition, $A$ is Hurwitz, then it is straightforward from the error dynamics of $e = x - \Pi \omega$ that the invariant manifold becomes attractive, hence the moment can fully describe the steady-state response.

Under this observation, a series of research focusing on reducing the model while matching the moment, hence preserving the exact steady-state response for a particular set of excitation of interest has been conducted for various classes of systems including linear, nonlinear, hybrid, and stochastic systems~\cite{astolfi2010model,scarciotti2016model,scarciotti2021moment}.

Meanwhile, if an additional feedback loop is applied to the system, the invariant manifold, in general, shifts, consequently altering the moment of the overall interconnected system.
For the model reduction by moment matching, this alteration is often regarded as a malfunction to be avoided.
In particular, recent efforts termed as closed-loop interpolation have focused on characterizing a specific class of feedback controllers that strictly preserve the original moment of the system, while providing attractivity~\cite{moreschini2024closed}.

In this paper, rather than viewing the feedback-induced alteration of moments as an unintended consequence, we exploit this phenomenon as a control design tool.
Since the moment fully dictates the steady-state response, by intentionally shifting the moment via feedback we can straightforwardly assign a predefined steady-state response for a given disturbance/reference, with the only additional effort of making the invariant manifold attractive.
In particular, via this preliminary work, we suggest the moment-based framework to transit from passive ``moment matching'' for modeling to active ``moment assignment'' for disturbance rejection and tracking control.
By doing so, this work formally bridges the gap between passive moment matching and active output regulation.
Furthermore, this moment-based abstraction drastically simplifies the control design, yielding the exact necessary and sufficient conditions for steady-state assignment and allowing for the complete parameterization of all moment-assigning controllers in a canonical form, hence suggesting the use of the Sylvester equation rather than regulator equations.

The paper is organized as follows.
Section II formulates the problem of interest.
Section III analyzes the modified moment via the plant-compensator interconnection (III-A); derives necessary and sufficient conditions for moment assignability (III-B); and formally characterizes this condition introducing the moment transfer operator (III-C).
Section IV parameterizes all moment-assigning compensators (IV-A) and provides necessary and sufficient conditions to stabilize the closed-loop system, followed by a constructive synthesis procedure (IV-B).
Sections V and VI contain a numerical example and concluding remarks.

\emph{Notation:} Let $\mathbb{C}$, $\mathbb{R}$, and $\mathbb{Z}$ denote the sets of complex numbers, real numbers, and integers, respectively. The real part of the complex number $z$ is denoted by $\mathfrak{R}(z)$. For a set $E \subseteq \mathbb{C}$ and a constant $a \in \mathbb{R}$, we define $E_{\ge a} := \{z \in E \mid \mathfrak{R}(z) \ge a\}$.
The $n\times n$ identity matrix is denoted by $I_n$, the $n\times n$ zero matrix is denoted by $0_n$, and the $m\times p$ zero matrix is denoted by $0_{m\times p}$.
The subscript could be omitted if it is trivial.
For the square matrix $A$, $\sigma(A)$ denotes its spectrum. For the linear operator $L$, $\mathcal{R}(L)$ and $\mathcal{N}(L)$ denote the range and the null space of $L$.
The Kronecker product is denoted by $\otimes$.
For the matrix $M\in\mathbb{R}^{p\times m}$, $\operatorname{vec}(M)\in\mathbb{R}^{pm}$ denotes the vectorization of $M$. For the matrices $M_1,\dots,M_N$, $\operatorname{blkdiag}(M_1,\dots,M_N)$ denotes the block diagonal matrix with diagonal blocks $M_1, \dots, M_n$.
Given square matrices $A\in\mathbb{R}^{n\times n}$ and $S\in\mathbb{R}^{\nu\times\nu}$, the Sylvester operator $\mathcal{L}_{(S,A)}:\mathbb{R}^{n\times\nu} \to \mathbb{R}^{n\times\nu}$ is defined by the mapping $\Pi \mapsto \Pi S - A\Pi$.

\section{Problem formulation}
\subsection{Open-loop moment}
Consider the MIMO LTI system of the equation
\begin{equation}\label{eq: LTI system with u and mu}
\begin{aligned}
\dot x &= Ax + Bu + P\mu, & y = Cx + Du + Q\mu,
\end{aligned}
\end{equation}
with the state $x\in\mathbb{R}^n$, the input $u\in\mathbb{R}^{m}$, the exogenous signal $\mu \in\mathbb{R}^q$, and the output $y\in\mathbb{R}^{p}$.
Suppose that the signal generator
\begin{equation}\label{eq: Signal generator for mu}
\begin{aligned}
\dot \omega & = S\omega, & v &= L\omega,
\end{aligned}
\end{equation}
with the state $\omega \in \mathbb{R}^\nu$ and the output $v\in\mathbb{R}^q$,
injects its output into the system~\eqref{eq: LTI system with u and mu} by $\mu =v$.

With no further input to \eqref{eq: LTI system with u and mu}, i.e.,  $u \equiv 0$, following the convention illustrated in the introduction we call $M_\mathrm{open}:=C\Pi + QL$ as the \emph{open-loop moment} of the system \eqref{eq: LTI system with u and mu} at $(S,L)$ where $\Pi = \mathcal{L}_{(S,A)}^{-1}(PL)$ is the unique solution of the Sylvester equation
\begin{equation*}
\Pi S = A\Pi + PL.
\end{equation*}

For this preliminary work, for the simplicity of discussion, we will have the standing assumption $\sigma(A) \cap \sigma(S) = \emptyset$, which is a necessary and sufficient condition for the invertibility of the Sylvester operator~\cite{bellman1997introduction}.

\subsection{Dynamic compensator \& closed-loop moment}
Consider the dynamic compensator of the general form
\begin{equation}\label{eq: compensator}
\begin{aligned}
\dot \xi &= F \xi + G u_\xi, & y_\xi = H\xi,
\end{aligned}
\end{equation}
with the state $\xi \in \mathbb{R}^\rho$, the input $u_\xi \in \mathbb{R}^{p}$, and the output $y_\xi\in\mathbb{R}^{m}$.
Here, the matrices $F$, $G$, $H$, and the order $\rho$ are the parameters to be designed, while the dimensions of the input and output $p$, $m$ are fixed as the dimensions of the output and input of the system \eqref{eq: LTI system with u and mu}, respectively.
If \eqref{eq: compensator} is interconnected with the system~\eqref{eq: LTI system with u and mu} by $u = y_\xi$ and $u_\xi = y$, the closed-loop system becomes
\begin{equation}\label{eq: closed-loop system}
\begin{aligned}
\dot z &= \underbrace{\begin{bmatrix}
A & BH \\ GC & F + GDH
\end{bmatrix}}_{=: A_\mathrm{cl}} z + \underbrace{\begin{bmatrix}
P \\ GQ
\end{bmatrix}}_{=: P_\mathrm{cl}} \mu\\
y &= \underbrace{\begin{bmatrix}
C & DH
\end{bmatrix}}_{=:C_\mathrm{cl}} z + Q\mu
\end{aligned}
\end{equation}
with $z = [x^\top \ \xi^\top]^\top \in\mathbb{R}^{n+\rho}$.

Hypothetically suppose that the spectra of $A_\mathrm{cl}$ and $S$ are disjoint.{\footnote{Since the standing assumption $\sigma(A) \cap \sigma(S) = \emptyset$ guarantees that the uncontrollable and unobservable modes of the plant do not overlap with $S$, the disjoint spectrum condition $\sigma(A_\mathrm{cl}) \cap \sigma(S) = \emptyset$ is not restricted by these fixed modes.}}
Then the moment of the closed-loop system~\eqref{eq: closed-loop system} at $(S,L)$ is $C_\mathrm{cl}\mathcal{L}_{(S, A_\mathrm{cl})}^{-1}(P_\mathrm{cl}L) + QL = C\Pi_x + DH\Pi_\xi + QL$ where $\Pi_x\in\mathbb{R}^{n\times\nu}$ and $\Pi_\xi \in \mathbb{R}^{\rho\times\nu}$ are the unique solution of the Sylvester equation
\begin{equation}\label{eq: Sylvester equation of the closed-loop system}
\begin{bmatrix}
\Pi_x \\ \Pi_\xi
\end{bmatrix} S
=
A_\mathrm{cl}
\begin{bmatrix}
\Pi_x \\ \Pi_\xi
\end{bmatrix} + 
P_\mathrm{cl}L.
\end{equation}
We will define $M_\mathrm{cl}(F, G, H; \rho) := C\Pi_x + DH\Pi_\xi + QL$ as the \emph{closed-loop moment of the system}~\eqref{eq: LTI system with u and mu} with the compensator~\eqref{eq: compensator} at $(S,L)$.
We will also denote $M_\mathrm{c} := H\Pi_\xi$ the \emph{closed-loop moment of the compensator}~\eqref{eq: compensator} with the system~\eqref{eq: LTI system with u and mu} at $(S,L)$, which is the moment corresponding to the output $y_\xi = [0 \ \ H]z$, i.e., $M_\mathrm{c} = [0 \ \  H]\mathcal{L}_{(S, A_\mathrm{cl})}^{-1}(P_\mathrm{cl}L)$.

\subsection{Problem statement}

Given the plant~\eqref{eq: LTI system with u and mu} and the signal generator~\eqref{eq: Signal generator for mu} satisfying the assumption $\sigma(A) \cap \sigma(S) = \emptyset$, let $M_{\mathrm{des}} \in \mathbb{R}^{{p} \times \nu}$ be the desired closed-loop moment. The objective is to answer the following set of questions.

\begin{quest}
Exactly when does a dynamic compensator of the form \eqref{eq: compensator} exist such that it assigns the closed-loop moment of \eqref{eq: LTI system with u and mu} at $(S, L)$ to $M_\mathrm{des}$, i.e., there exists $(F, G, H; \rho)$ such that $M_\mathrm{cl}(F, G, H; \rho) = M_\mathrm{des}$?
\end{quest}

\begin{quest}
For a given assignable moment $M_\mathrm{des}$, would there be a canonical structure that can effectively parameterize all the compensator that assigns $M_\mathrm{des}$?
\end{quest}

\begin{quest}
In addition to the moment assignability, exactly when can the invariant manifold be attractive, i.e., there exists $(F, G, H; \rho)$ such that $M_\mathrm{cl}(F, G, H; \rho) = M_\mathrm{des}$ and $A_\mathrm{cl}$ is Hurwitz?
\end{quest}

\begin{quest}
For a given assignable steady-state response $M_\mathrm{des}$, would there be a constructive way to synthesize actual dynamic compensator that achieve the moment assignment while also ensuring the attractivity of the invariant manifold?
\end{quest}

\section{Moment assignability}
\subsection{Parameterization of closed-loop moment by compensator}
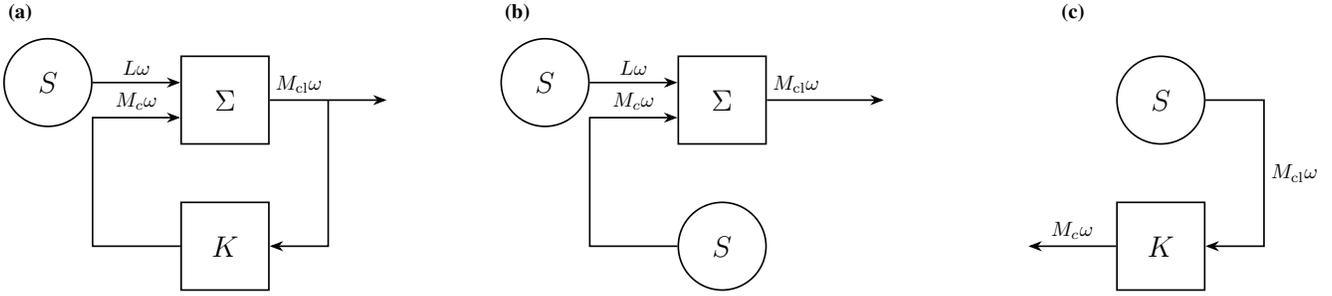
\begin{figure*}[t]
\usetikzlibrary{positioning, arrows.meta, calc}
    \centering
    \resizebox{\textwidth}{!}{
\begin{tikzpicture}[
    >=Stealth,
    block/.style={rectangle, draw, thick, minimum width=1.5cm, minimum height=1.5cm, font=\Large},
    gen/.style={circle, draw, thick, minimum size=1.5cm, font=\Large,
    },
]

\begin{scope}[shift={(0,0)}]
    \node[block] (Sigma_a) at (0,0) {$\Sigma$};
    \node[block] (K_a) at (0,-2.5) {$K$};
    
    \coordinate (in1_a) at ([yshift=0.3cm]Sigma_a.west);
    \node[gen, left=1.5cm of in1_a] (S_a) {$S$};

    \draw[->, thick] (S_a.east) -- (in1_a) node[midway, above] {$L\omega$};
    \draw[->, thick] (Sigma_a.east) -- ++(2,0) coordinate (out_a) node[near start, above] {$M_\mathrm{cl}\omega$};
    
    \draw[->, thick] ([xshift=1cm]Sigma_a.east) |- (K_a.east);
    \draw[->, thick] (K_a.west) -- ++(-1.5,0) |- ([yshift=-0.3cm]Sigma_a.west) node[near end, above] {$M_c\omega$};

    \node at (-3.5, 1.5) {\textbf{(a)}};
\end{scope}

\begin{scope}[shift={(8.5,0)}]
    \node[block] (Sigma_b) at (0,0) {$\Sigma$};
    \node[gen] (S_b2) at (0,-2.5) {$S$}; 
    
    \coordinate (in1_b) at ([yshift=0.3cm]Sigma_b.west);
    \node[gen, left=1.5cm of in1_b] (S_b1) {$S$};

    \draw[->, thick] (S_b1.east) -- (in1_b) node[midway, above] {$L\omega$};
    \draw[->, thick] (S_b2.west) -- ++(-1.5,0) |- ([yshift=-0.3cm]Sigma_b.west) node[near end, above] {$M_c\omega$};
    \draw[->, thick] (Sigma_b.east) -- ++(2,0) node[near start, above] {$M_\mathrm{cl}\omega$};

    \node at (-3.5, 1.5) {\textbf{(b)}};
\end{scope}

\begin{scope}[shift={(16,0)}]
    \node[gen] (S_c) at (0,0) {$S$}; 
    \node[block] (K_c) at (0,-2.5) {$K$};

    \draw[->, thick] (S_c.east) -- ++(1,0) coordinate (turn) |- (K_c.east);
    \node[right] at ($(turn)!0.5!(turn |- K_c.east)$) {$M_\mathrm{cl}\omega$};
    
    \draw[->, thick] (K_c.west) -- ++(-1.5,0) node[midway, above] {$M_c\omega$};

    \node at (-1.5, 1.5) {\textbf{(c)}};
\end{scope}

\end{tikzpicture}
}
    \caption{Structural decomposition of interconnected dynamics on the invariant manifold~\eqref{eq: Invariant manifold of closed-loop system}.
    (a) The original interconnected system with the signal generator of equation $\dot \omega = S \omega$ ($S$), plant ($\Sigma$), and compensator ($K$).
    (b) From the perspective of the plant $\Sigma$, incoming signals from $S$ and $K$ of (a) can be regarded as a signal from the virtual signal generator $(S, [L^\top \ \ M_\mathrm{c}^\top]^\top)$, yielding the open-loop moment $M_\mathrm{cl}$.
    Note that to assign moment to $M_\mathrm{des}$, existence of virtual a signal generator $(S,M_\mathrm{c})$ which makes $M_\mathrm{cl} = M_\mathrm{des}$ is essential.
    (c) From the  perspective of the compensator $K$, incoming signal from $\Sigma$ of (a) can be regarded as a signal from virtual signal generator $(S, M_\mathrm{cl})$, yielding the open-loop moment $M_\mathrm{c}$.
    Note that the synthesis problem can now be simplified to designing a compensator that generates the required moment $M_\mathrm{c}$ when driven by the virtual desired generator $(S, M_\mathrm{des})$.}
    \label{fig. Analysis on the invariant manifold}
\end{figure*}

In this section, we investigate the influence of the compensator on the closed-loop moment.
Partitioning the Sylvester equation~\eqref{eq: Sylvester equation of the closed-loop system} yields
\begin{subequations}\label{eq: closed sylvester - a,b}
\begin{align}
\Pi_x S &= A\Pi_x + (PL+BH\Pi_\xi) , \label{eq: closed sylvester - a}\\
\Pi_\xi S &= (F+GDH)\Pi_\xi + G(C\Pi_x +QL). \label{eq: closed sylvester - b}
\end{align}
\end{subequations}
If $\sigma(A_\mathrm{cl}) \cap \sigma(S) = \emptyset$, then there exists unique pair $(\Pi_x,\Pi_\xi)$ achieving both \eqref{eq: closed sylvester - a} and \eqref{eq: closed sylvester - b}. Furthermore, the interconnnected system of \eqref{eq: Signal generator for mu} and \eqref{eq: closed-loop system}, with $\mu=v$ possesses the invariant manifold
\begin{equation}\label{eq: Invariant manifold of closed-loop system}
\Upsilon_{x\xi\omega} = \{ (x,\xi,\omega) \in \mathbb{R}^{n + \rho + \nu} \mid x = \Pi_x \omega, \xi = \Pi_\xi \omega \},
\end{equation}
which is attractive, if and only if $A_\mathrm{cl}$ is Hurwitz.

From~\eqref{eq: closed sylvester - a}, $\Pi_x$ can be explicitly expressed as
\begin{equation} \label{eq: Pi_x decomposition}
\Pi_x = \mathcal{L}_{(S,A)}^{-1} (PL) + \mathcal{L}_{(S,A)}^{-1} (BH\Pi_\xi).
\end{equation}
Note that the first term, $\mathcal{L}_{(S,A)}^{-1} (PL)$, corresponds to the open-loop moment, and $H\Pi_\xi$ is the closed-loop moment of the compensator $M_\mathrm{c}$.
Hence, we obtain    
\begin{equation} \label{eq: M_cl expansion}\small
\begin{aligned}
&M_\mathrm{cl}(F, G, H; \rho) = C\Pi_x + DH\Pi_\xi + QL \\
&= \underbrace{\left[C\mathcal{L}_{(S,A)}^{-1} (PL) + QL\right]}_{=M_\mathrm{open}} + \underbrace{\left[C\mathcal{L}_{(S,A)}^{-1} (BM_\mathrm{c}) + DM_\mathrm{c}\right]}_{\text{moment of $(A,B,C,D)$ at $(S,M_\mathrm{c})$}}.
\end{aligned}
\end{equation}
Here, $M_\mathrm{open}$ was already introduced in Section II-A as the open-loop moment of the system~\eqref{eq: LTI system with u and mu} at $(S,L)$.
Equation~\eqref{eq: M_cl expansion} indicates that the closed-loop moment is, in fact, the superposition of the open-loop moment and the term induced by the compensator,
which can be reinterpreted as again the open-loop moment of the system identified by the quadruple $(A, B, C, D)$, but at a virtual signal generator $(S, M_\mathrm{c})$.
Together, the closed-loop moment is nothing but an open-loop moment of the system~\eqref{eq: LTI system with u and mu} at a concatenated virtual signal generator $(S, [ L^\top  \ \ M_c^\top ]^\top)$.

On the other hand, rewriting \eqref{eq: closed sylvester - b} with $M_\mathrm{cl} = C\Pi_x + DH\Pi_\xi + QL$, we get
\begin{equation}
\Pi_\xi S = F\Pi_\xi + G M_\mathrm{cl}.
\end{equation}
This again implies that the closed-loop moment of the compensator $M_\mathrm{c} = H\Pi_\xi$ can also be interpreted as the open-loop moment of the compensator~\eqref{eq: compensator} at another virtual signal generator $(S, M_\mathrm{cl})$.

In summary, the interconnected dynamics on the invariant manifold~\eqref{eq: Invariant manifold of closed-loop system} can be decoupled to its subsystems, while for each subsystem, all the incoming signals can be interpreted as generated by the virtual signal generator $(S, M_\mathrm{in})$. The output of each subsystem is again abstracted by the open-loop moment $M_\mathrm{out}$, which acts as a virtual signal generator for other subsystems.
While the exogenous system $(S, L)$ acts as the real physical signal generator, the plant and compensator operate as mutually dependent virtual signal generators.
The compensator functions as a generator $(S, M_\mathrm{c})$ where $M_\mathrm{c}$ is driven by the plant's moment $M_\mathrm{cl}$, whereas 
the plant also functions as a generator $(S, M_\mathrm{cl})$ where $M_\mathrm{cl}$ is fully described by the compensator's moment $M_\mathrm{c}$.
See also Figure~\ref{fig. Analysis on the invariant manifold}.

\subsection{Moment assignability}

According to the key observation made at the end of previous section, the specific physical realizations of the system and the compensator become insignificant for assigning the desired closed-loop moment.
Instead of considering the complex internal structure of a compensator, the synthesis problem now depends solely on the virtual signal generator it represents.
Consequently, the existence of a virtual signal generator $(S,M_\mathrm{c})$ which let $M_\mathrm{cl}$ become $M_\mathrm{des}$ is a necessary condition for the moment assignability.
In this point of view, it is crucial to know how a subsystem transfer incoming signal to moment; what would be the resulting moment $M_\mathrm{out}$ when injected by a virtual signal generator $(S,M_\mathrm{in})$.
Therefore, we first introduce the operator which explicitly maps incoming signal generated by $(S,M_\mathrm{in})$ to the open-loop moment of the given LTI system.

\begin{defn}
Given LTI system $(A,B,C,D)$ and $S\in\mathbb{R}^{\nu\times\nu}$, with $\sigma(A) \cap \sigma(S) = \emptyset$, the moment transfer operator of $(A,B,C,D)$ at $S$ is the linear mapping $\mathcal{T}_S: \mathbb{R}^{{m}\times\nu} \to \mathbb{R}^{{p}\times\nu}$ such that
\begin{equation}
\mathcal{T}_S(M_\mathrm{in}) := C\mathcal{L}_{(S,A)}^{-1}(BM_\mathrm{in}) + DM_\mathrm{in}.
\end{equation}
\end{defn}

\begin{thm}\label{thm: Closed-loop moment adjusted by compensator moment}
Given a plant~\eqref{eq: LTI system with u and mu} and a signal generator~\eqref{eq: Signal generator for mu}, suppose that $\sigma(A) \cap \sigma(S) = \emptyset$, and let $M_\mathrm{open}\in\mathbb{R}^{p\times\nu}$ be the open-loop moment of the system \eqref{eq: LTI system with u and mu} at $(S,L)$.
Let $\mathcal{T}_S(\cdot)$ be a moment transfer operator of $(A,B,C,D)$ at $S$.
For given $M_\mathrm{des}\in\mathbb{R}^{{p}\times\nu}$ and a compensator~\eqref{eq: compensator} such that $\sigma(A_\mathrm{cl}) \cap \sigma(S) = \emptyset$, we have
$M_\mathrm{cl}(F,G,H,\rho) = M_\mathrm{des}$, if and only if
the closed-loop moment of the compensator~\eqref{eq: compensator} is $M_\mathrm{c}$ which satisfies
\begin{equation}\label{eq: proper M_c}
\mathcal{T}_S(M_\mathrm{c}) = M_\mathrm{des} - M_\mathrm{open}.
\end{equation}
\end{thm}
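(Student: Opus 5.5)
The plan is to derive the identity \eqref{eq: M_cl expansion} rigorously and read the equivalence directly off it. Fix a compensator $(F,G,H;\rho)$ with $\sigma(A_\mathrm{cl})\cap\sigma(S)=\emptyset$. The closed-loop Sylvester equation \eqref{eq: Sylvester equation of the closed-loop system} then has a unique solution $(\Pi_x,\Pi_\xi)$. This determines $M_\mathrm{c}=H\Pi_\xi$ and $M_\mathrm{cl}(F,G,H;\rho)=C\Pi_x+DH\Pi_\xi+QL$ unambiguously.

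\textbf{Step 1: solve for $\Pi_x$.} Take the first block row \eqref{eq: closed sylvester - a} and rewrite it as $\mathcal{L}_{(S,A)}(\Pi_x)=PL+BH\Pi_\xi$. Under the standing assumption $\sigma(A)\cap\sigma(S)=\emptyset$, the operator $\mathcal{L}_{(S,A)}$ is invertible, and it is linear. Hence $\Pi_x=\mathcal{L}_{(S,A)}^{-1}(PL)+\mathcal{L}_{(S,A)}^{-1}(BM_\mathrm{c})$, which is \eqref{eq: Pi_x decomposition}.

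\textbf{Step 2: split the closed-loop moment.} Substitute this expression into $M_\mathrm{cl}$ and regroup:
\begin{equation*}
M_\mathrm{cl}=\bigl[C\mathcal{L}_{(S,A)}^{-1}(PL)+QL\bigr]+\bigl[C\mathcal{L}_{(S,A)}^{-1}(BM_\mathrm{c})+DM_\mathrm{c}\bigr]=M_\mathrm{open}+\mathcal{T}_S(M_\mathrm{c}).
\end{equation*}
The first bracket is $M_\mathrm{open}$ by its definition in Section II-A. The second is $\mathcal{T}_S(M_\mathrm{c})$ by the definition of the moment transfer operator.

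\textbf{Step 3: read off both directions.} If $M_\mathrm{cl}=M_\mathrm{des}$, then $\mathcal{T}_S(M_\mathrm{c})=M_\mathrm{des}-M_\mathrm{open}$ holds for the closed-loop moment $M_\mathrm{c}$ of the compensator. Conversely, if the compensator's closed-loop moment $M_\mathrm{c}$ satisfies \eqref{eq: proper M_c}, the identity gives $M_\mathrm{cl}=M_\mathrm{open}+M_\mathrm{des}-M_\mathrm{open}=M_\mathrm{des}$.

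The only real care point is well-definedness. The statement concerns "the" closed-loop moment $M_\mathrm{c}$ of the given compensator, not an arbitrary matrix satisfying \eqref{eq: proper M_c}. I will therefore stress that $\sigma(A_\mathrm{cl})\cap\sigma(S)=\emptyset$ pins down $\Pi_\xi$, and hence $M_\mathrm{c}$, uniquely. The equality between $M_\mathrm{cl}$ and $M_\mathrm{open}+\mathcal{T}_S(M_\mathrm{c})$ is then a genuine identity, valid for every such compensator. Beyond this, the argument needs only linearity of $\mathcal{L}_{(S,A)}^{-1}$, so I do not expect a substantive obstacle.
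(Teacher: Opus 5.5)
Your proposal is correct and essentially matches the paper's proof. Like the paper, you solve the first block row of the closed-loop Sylvester equation using invertibility of $\mathcal{L}_{(S,A)}$, obtain the identity $M_\mathrm{cl} = M_\mathrm{open} + \mathcal{T}_S(M_\mathrm{c})$, and read both directions off it; your point that $\sigma(A_\mathrm{cl})\cap\sigma(S)=\emptyset$ makes $M_\mathrm{c}$ well defined is a small improvement on the paper's terser necessity argument.
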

\begin{proof}
\textbf{Sufficiency}:
The condition $\sigma(A_\mathrm{cl}) \cap \sigma(S) = \emptyset$ guarantees the existence and uniqueness of the solution $\Pi_x$ and $\Pi_\xi$ to \eqref{eq: closed sylvester - a,b}.
Since $\sigma(A) \cap \sigma(S) = \emptyset$, $\Pi_x = \mathcal{L}_{(S,A)}^{-1} (PL) + \mathcal{L}_{(S,A)}^{-1} (BH\Pi_\xi)$.
By applying \eqref{eq: M_cl expansion}, the closed-loop moment becomes $M_\mathrm{cl} = M_\mathrm{open} + \mathcal{T}_S(M_\mathrm{c})$. Due to the assumption \eqref{eq: proper M_c}, we obtain $M_\mathrm{cl} = M_\mathrm{open} + (M_\mathrm{des} - M_\mathrm{open}) = M_\mathrm{des}$.

\textbf{Necessity}:
Suppose that $M_\mathrm{cl} = M_\mathrm{des}$.
By \eqref{eq: M_cl expansion}, $M_\mathrm{cl} = M_\mathrm{open} +\mathcal{T}_S(M_\mathrm{c})$.
\end{proof}

As will be shown in Theorem~\ref{thm: canonical compensator}, provided that $M_\mathrm{c}$ satisfies \eqref{eq: proper M_c}, one can always construct a compensator whose closed-loop moment is $M_\mathrm{c}$. Consequently, the desired moment assignability depends entirely on the existence of a solution $M_\mathrm{c}$ to \eqref{eq: proper M_c} for a given desired moment difference $\Delta M := M_\mathrm{des} - M_\mathrm{open}$.
Therefore, the desired moment assignability is equivalent to the algebraic condition $\Delta M \in \mathcal{R}(\mathcal{T}_S)$.

\begin{rem}\label{rem: comparison with OR1}
Under the assumption $\sigma(A) \cap \sigma(S) = \emptyset$, regarding $y$ as a signal to be regulated, the existence of $M_\mathrm{c}$ satisfying \eqref{eq: proper M_c}, with $M_\mathrm{des}=0$ and $L = I_\nu$, is equivalent to the solvability of the regulator equations of output regulation. Consider the existence problem of $M_\mathrm{c}$ satisfying $\mathcal{T}_S(M_\mathrm{c}) = -M_\mathrm{open}$, i.e.,  $C\mathcal{L}_{(S,A)}^{-1}(BM_\mathrm{c}) + DM_\mathrm{c} = -(C\mathcal{L}_{(S,A)}^{-1}(P) + Q)$.
This is equivalent to the existence problem of $M_\mathrm{c}$ satisfying $C\Pi + DM_\mathrm{c} + Q = 0$ where $\Pi$ satisfies $\Pi S = A\Pi + BM_\mathrm{c} + P$.
Referring to \cite{huang2004nonlinear}, it is called regulator equations, which is crucial for the solvability of output regulation.
In conclusion, this structural decomposition of $\Pi$ opens the algebraic black box of the classical regulator equations, revealing the physical mechanism behind output regulation, which only utilizes the Sylvester equation.
\end{rem}

\begin{rem}\label{rem: comparison with OR2}
Even in the scenario where the desired moment $M_\mathrm{des}$ falls outside the range of the moment transfer operator $\mathcal{T}_S$, the proposed framework suggests an alternative as an optimization problem.
Instead of seeking an exact solution to \eqref{eq: proper M_c}, one can seek for the optimizing compensator moment $M_c^*$ that minimizes the moment difference 
$\|\Delta M - \mathcal{T}_S(M_\mathrm{c})\|$ then, the redefined optimal desired moment would become $M_\mathrm{des}^* := M_\mathrm{open} + \mathcal{T}_S(M_\mathrm{c}^*)$.
Note that the metric $\|\cdot\|$ can be freely chosen, allowing the designer to penalize certain disturbance or reference modes more heavily.
\end{rem}

\subsection{Details on the moment transfer operator}
Analyzing the moment transfer operator $\mathcal{T}_S$ is achieved by connecting it to the classical frequency-domain notion of moment.
\begin{defn}[\cite{antoulas2005approximation}]\label{def: classical moment}
Let $W(s) = C(sI-A)^{-1}B + D$ be a transfer function of LTI system $(A,B,C,D)$ and $s^\star\in\mathbb{C}\setminus\sigma(A)$.
The $0$-moment of system $(A,B,C,D)$ at $s^\star$ is the complex matrix $\eta_0(s^\star) = W(s^\star)$.
For $k\in\mathbb{Z}_{\ge1}$, the $k$-moment of system $(A,B,C,D)$ at $s^\star$ is the complex matrix
\begin{equation}
\eta_k(s^\star) = \frac{(-1)^k}{k!}\left[ \frac{d^k}{ds^k} W(s) \right]_{s=s^\star}.
\end{equation}
\end{defn}

\begin{thm}\label{thm: explicit moment transfer operator}
Consider the LTI system $(A,B,C,D)$, the real matrix $S\in\mathbb{R}^{\nu\times\nu}$ with $\sigma(A) \cap \sigma(S) = \emptyset$, and its moment transfer operator at $S$, $\mathcal{T}_S$.
Let $\Sigma \in \mathbb{C}^{\nu\times\nu}$ be the Jordan canonical form of $S$ whose explicit form is
\begin{equation}
\begin{bmatrix}
\Sigma_1 & ~ & ~ & ~ \\
~ & \Sigma_2 & ~ & ~ \\
~ & ~ & \ddots & ~ \\
~ & ~ & ~ & \Sigma_d
\end{bmatrix}
\end{equation}
where $d$ is the total number of Jordan blocks, and each block $\Sigma_i \in \mathbb{C}^{\nu_i \times \nu_i}$ is given by
\begin{equation}
\Sigma_i =
\begin{bmatrix}
s_i & 1 & ~ & ~ & ~ \\
~ & s_i & 1 & ~ & ~ \\
~ & ~ & \ddots & \ddots & ~ \\
~ & ~ & ~ & s_i & 1 \\
~ & ~ & ~ & ~ & s_i
\end{bmatrix}.
\end{equation}
Then, for all $M \in \mathbb{R}^{m \times \nu}$,
\begin{equation}\label{eq: explicit form of the moment transfer operator}
\mathcal{T}_S(M) = \sum_{i=1}^{d} \sum_{k=0}^{\nu_i - 1} \eta_k(s_i) M P_i (s_i I_\nu - S)^k P_i
\end{equation}
where $\eta_k(s_i)$ is the $k$-moment of $(A,B,C,D)$ at $s_i$, and 
$P_i\in\mathbb{C}^{\nu\times\nu}$ is a projection matrix onto the generalized eigenspace corresponding to the $i$-th Jordan block $\Sigma_i$.
\end{thm}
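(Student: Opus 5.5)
The plan is to solve the Sylvester equation $\Pi S = A\Pi + BM$ explicitly through the resolvent of $A$ and the spectral decomposition of $S$. Write $S = T\Sigma T^{-1}$. The matrices $P_i$ are then $T E_i T^{-1}$, where $E_i$ is the diagonal selector of the $i$-th Jordan block. These projections satisfy $\sum_i P_i = I_\nu$, $P_iP_j = \delta_{ij}P_i$, and $SP_i = P_iS$. Moreover, $N_i := (S - s_iI)P_i$ is nilpotent with $N_i^{\nu_i} = 0$, since it is conjugate to the nilpotent part of $\Sigma_i$ padded with zeros. Because $\Pi \mapsto \Pi S - A\Pi$ commutes with right multiplication by $P_i$, we get $\Pi = \sum_i \Pi P_i$, where $\Pi_i := \Pi P_i$ is the unique solution of $\Pi_i S = A\Pi_i + BMP_i$. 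Uniqueness holds because $\sigma(A)\cap\sigma(S)=\emptyset$, and complexifying does not change this. It therefore suffices to treat each block separately.

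On block $i$, use $SP_i = s_iP_i + N_i$. The equation becomes
\[
\Pi_i N_i = (A - s_iI)\Pi_i + BMP_i ,
\]
that is,
\[
\Pi_i = (s_iI - A)^{-1}BMP_i + (s_iI-A)^{-1}\Pi_i N_i .
\]
The next step is to iterate this identity, or equivalently to guess the finite Neumann series and verify it:
\[
\Pi_i = \sum_{k=0}^{\nu_i-1} (s_iI-A)^{-(k+1)} B M P_i N_i^k .
\]
The series terminates by nilpotency. To verify it, substitute into the equation and telescope, using $P_iN_i = N_iP_i = N_i$ and $N_i^{\nu_i}=0$. With uniqueness this settles the block.

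Then apply $C(\cdot) + DM$:
\[
\mathcal{T}_S(M) = \sum_i \Big[ C\Pi_i + DMP_i \Big] = \sum_i \sum_{k} \Big[ C(s_iI-A)^{-(k+1)}B + \delta_{k0}D \Big] M N_i^k .
\]
Now identify the coefficients. Since $\frac{d^k}{ds^k}(sI-A)^{-1} = (-1)^k k!\,(sI-A)^{-(k+1)}$, we get $\eta_k(s_i) = C(s_iI-A)^{-(k+1)}B$ for $k\ge1$ and $\eta_0(s_i) = C(s_iI-A)^{-1}B + D$. Finally, rewrite
\[
N_i^k = (S-s_iI)^kP_i = (-1)^k P_i(s_iI-S)^kP_i .
\]

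The main obstacle is this sign bookkeeping. The factor $(-1)^k$ arising from $(S-s_iI)^k$ versus $(s_iI-S)^k$ must cancel correctly against the $(-1)^k$ built into the definition of $\eta_k$ in Definition~\ref{def: classical moment}. I would check the convention carefully on a single $2\times2$ Jordan block. If a residual sign appears, it has to be absorbed by the definition of $\eta_k$, so I would verify that against the stated formula~\eqref{eq: explicit form of the moment transfer operator}. Two further points need care: real $M$ combined with complex $P_i$, which is handled by working over $\mathbb{C}$ throughout, and the fact that repeated eigenvalues $s_i$ across several blocks cause no issue, because the projections are taken per block.
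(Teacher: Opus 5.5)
\textbf{Gap: a sign error in the block equation.} This is the sign problem you anticipated, but it enters in your rearrangement of the block equation, not through $\eta_k$.

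From $\Pi_i N_i = (A-s_iI)\Pi_i + BMP_i$ you get $(s_iI-A)\Pi_i = BMP_i - \Pi_iN_i$. With $R_i := (s_iI-A)^{-1}$, the correct fixed-point identity is therefore $\Pi_i = R_iBMP_i - R_i\Pi_iN_i$, not $+R_i\Pi_iN_i$. Iterating it gives an alternating series:
\[
\Pi_i = \sum_{k=0}^{\nu_i-1} (-1)^k R_i^{k+1} B M P_i N_i^k .
\]
Your non-alternating series does not solve the block Sylvester equation. Take a single $2\times 2$ Jordan block with eigenvalue $s$, scalar $A=a$, $B=C=1$, $D=0$. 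Your series gives second column $m_2/(s-a)+m_1/(s-a)^2$, while the true value is $m_2/(s-a)-m_1/(s-a)^2$. The ``telescoping'' check only passes against your mis-rearranged equation.

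Carried through as written, your argument yields $\sum_i\sum_k(-1)^k\eta_k(s_i)MP_i(s_iI-S)^kP_i$, which contradicts the statement. Your proposed fallback, absorbing a residual sign into $\eta_k$, is not available. $\eta_k$ is fixed by Definition~\ref{def: classical moment}, and you yourself correctly found $\eta_k(s_i)=CR_i^{k+1}B$ for $k\ge 1$, with no sign.

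With the corrected series, the $(-1)^k$ from the iteration cancels exactly against $N_i^k=(-1)^kP_i(s_iI-S)^kP_i$. This gives $C\Pi_i+DMP_i=\sum_{k}\eta_k(s_i)MP_i(s_iI-S)^kP_i$, which is the stated formula. Also write the $k=0$ term as $MP_i$, not $MN_i^0=M$.

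\textbf{Comparison with the paper's route.} Once the sign is fixed, your route is valid and genuinely different from the paper's. The paper passes to Jordan coordinates $\tilde\Pi=\Pi T^{-1}$, $\tilde M=MT^{-1}$ and splits the equation into $\tilde\Pi_i\Sigma_i=A\tilde\Pi_i+B\tilde M_i$. It then imports the column-wise formula $(C\tilde\Pi_i+D\tilde M_i)_j=\sum_{k=0}^{j-1}(-1)^k\eta_k(s_i)m_{i(j-k)}$ from \cite{shakib2023time}. Finally it reassembles this as $\sum_k\eta_k(s_i)\tilde M_i(s_iI-\Sigma_i)^k$ and conjugates back with $P_i=T^{-1}E_iT$.

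Your approach stays in the original coordinates and uses the fact that the Sylvester operator commutes with right multiplication by the spectral projections. It derives the block solution from scratch via a terminating Neumann series, so it is self-contained and makes the origin of the alternating signs explicit. The paper's proof is shorter but relies on the cited lemma for exactly the step where you slipped.
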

\begin{proof}
Let $T\in\mathbb{C}^{\nu\times\nu}$ be a nonsingular matrix satisfying $S = T^{-1}\Sigma T$.
Define $\tilde M := MT^{-1}$.
Then the Sylvester equation $\Pi S = A\Pi + BM$ is equivalent to
\begin{equation}\label{eq: Sylvester equation for Sigma}
\tilde \Pi \Sigma = A\tilde\Pi + B\tilde M
\end{equation}
where $\tilde \Pi = \Pi T^{-1}$.
Decomposing the matrices as $\tilde \Pi = [\tilde \Pi_1 \ \cdots \ \tilde \Pi_d]$ and $\tilde M = [\tilde M_1 \ \cdots \ \tilde M_d]$ where $\tilde\Pi_i \in \mathbb{C}^{n\times\nu_i}$ and $\tilde M_i \in \mathbb{C}^{m\times\nu_i}$,
\eqref{eq: Sylvester equation for Sigma} can be decomposed into $d$ equations
\begin{equation}
\tilde \Pi_i \Sigma_i = A\tilde\Pi_i + B\tilde M_i.
\end{equation}
According to \cite[Lemma 3]{shakib2023time}, we have
\begin{equation}
\left(C\tilde\Pi_i + D\tilde M_i \right)_j = \sum_{k=0}^{j-1}(-1)^k\eta_k(s_i)m_{i(j-k)}
\end{equation}
where $(C\tilde\Pi_i + D\tilde M_i)_j$ denotes the $j$-th column of $C\tilde\Pi_i + D \tilde M_i$ and $m_{ij}$ is the $j$-th column of $\tilde M_i$.
Therefore, we obtain that
\begin{equation}
\begin{aligned}
C\tilde \Pi_i + D\tilde M_i &= \eta_0(s_i)[m_{i1} \ m_{i2} \ \cdots \ m_{i\nu_i}]\\
& \quad - \eta_1(s_i) [0 \ m_{i1} \ \cdots \ m_{i(\nu_i-1)}] + \cdots \\
&\quad +(-1)^{\nu_i-1} \eta_{\nu_i-1}(s_i)[0 \ \cdots \ m_{i1}] \\
&= \sum_{k=0}^{\nu_i - 1} \eta_k(s_i) \tilde M_i (s_i I_{\nu_i} - \Sigma_i)^k.
\end{aligned}
\end{equation}
This procedure also yields
\begin{equation}
C\tilde\Pi + D\tilde M = \sum_{i=1}^{d} \sum_{k=0}^{\nu_i - 1} \eta_k(s_i) \tilde M E_i (s_i I_\nu - \Sigma)^k E_i
\end{equation}
where $E_i = \operatorname{blkdiag}(0_{\nu_1},\dots,0_{\nu_{i-1}},I_{\nu_i},0_{\nu_{i+1}},\dots,0_{\nu_d})$.
Using $\tilde \Pi = \Pi T^{-1}$, $\Sigma =  TST^{-1}$, and $\tilde M = M T^{-1}$, with $P_i := T^{-1}E_i T
$, one can obtain \eqref{eq: explicit form of the moment transfer operator}.
\end{proof}

\begin{cor}\label{cor: explicit T_S as a matrix}
The moment transfer operator of $(A,B,C,D)$ at $S$ has the matrix representation
\begin{equation}
\operatorname{vec}(\mathcal{T}_S(M)) = \mathbf{T}_S \operatorname{vec}(M)
\end{equation}
where
\begin{equation}
\mathbf{T}_S = (T\otimes I_m)^\top \mathbf{H} (T\otimes I_p)^{-\top}
\end{equation}
with matrices $E_i$ and $T$ in the proof of Theorem~\ref{thm: explicit moment transfer operator} and
\begin{equation} \label{eq: The H of moment transfer operator}
\mathbf{H} = \sum_{i=1}^{d} \sum_{k=0}^{\nu_i - 1} \left( E_i^\top (s_i I_\nu - \Sigma^\top)^k E_i^\top \right ) \otimes (\eta_k(s_i)).
\end{equation}
\end{cor}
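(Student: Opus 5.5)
The plan is to vectorize the explicit expression for $\mathcal{T}_S$ obtained in the proof of Theorem~\ref{thm: explicit moment transfer operator}. I will use the standard identity $\operatorname{vec}(XYZ) = (Z^\top \otimes X)\operatorname{vec}(Y)$ together with the mixed-product rule $(A_1\otimes B_1)(A_2\otimes B_2)(A_3\otimes B_3) = A_1A_2A_3 \otimes B_1B_2B_3$.

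First, I will take the form established just before the final substitution in that proof, namely
\begin{equation*}
\mathcal{T}_S(M) = \Big(\sum_{i=1}^{d}\sum_{k=0}^{\nu_i-1} \eta_k(s_i)\, M T^{-1} E_i (s_i I_\nu - \Sigma)^k E_i\Big) T .
\end{equation*}
This follows because $C\Pi + DM = (C\tilde\Pi + D\tilde M)T$ with $\tilde M = MT^{-1}$. Each summand has the form $X M Z$ with $X = \eta_k(s_i)\in\mathbb{C}^{p\times m}$ and $Z = T^{-1}E_i(s_iI_\nu-\Sigma)^kE_i T$. Vectorizing therefore gives the summand
\begin{equation*}
\big(T^\top E_i^\top (s_iI_\nu-\Sigma^\top)^k E_i^\top T^{-\top}\big)\otimes \eta_k(s_i),
\end{equation*}
where I use $\big((s_iI_\nu-\Sigma)^k\big)^\top = (s_iI_\nu-\Sigma^\top)^k$.

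Second, I will factor this summand via the mixed-product rule as
\begin{equation*}
(T^\top\otimes I_p)\,\Big(E_i^\top (s_iI_\nu-\Sigma^\top)^k E_i^\top \otimes \eta_k(s_i)\Big)\,(T^{-\top}\otimes I_m).
\end{equation*}
Summing over $i$ and $k$ then pulls the outer factors out and leaves exactly the matrix $\mathbf{H}$ of \eqref{eq: The H of moment transfer operator} in the middle. Finally, I will rewrite $T^\top\otimes I = (T\otimes I)^\top$ and $T^{-\top}\otimes I = (T\otimes I)^{-\top}$, which gives $\mathbf{T}_S$ as stated.

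The main obstacle is only bookkeeping, namely tracking which identity goes on which side. Dimensional consistency forces the left outer factor to carry $I_p$ and the right one $I_m$, since $\operatorname{vec}(M)\in\mathbb{C}^{m\nu}$ and $\operatorname{vec}(\mathcal{T}_S(M))\in\mathbb{C}^{p\nu}$. The stated subscripts are thus to be read so that the dimensions match; they coincide when $m=p$, and otherwise should be swapped. I would make this explicit in the proof.

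A secondary point is that $T$, $\Sigma$ and the $\eta_k(s_i)$ may be complex while $\mathcal{T}_S$ maps real matrices to real matrices. The vectorized identity holds over $\mathbb{C}$ regardless, so $\mathbf{T}_S$ is automatically real. This follows because it represents a real linear map in the standard basis, so no further argument is needed.
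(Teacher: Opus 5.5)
Your proposal is correct and follows the paper's own argument: vectorize the expression from Theorem~\ref{thm: explicit moment transfer operator} using $\operatorname{vec}(XMZ)=(Z^\top\otimes X)\operatorname{vec}(M)$, then factor each summand with the mixed-product rule. Your dimension check is also right: since $\mathbf{H}\in\mathbb{C}^{p\nu\times m\nu}$, the formula is well defined only as $\mathbf{T}_S=(T\otimes I_p)^\top\mathbf{H}(T\otimes I_m)^{-\top}$, a subscript swap that the paper's one-line proof passes over under ``compatible dimensions''.
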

\begin{proof}
It is straightforward by applying the following properties of the Kronecker product~\cite{graham2018kronecker} to Theorem~\ref{thm: explicit moment transfer operator} (assuming matrices have compatible dimensions and full rank where necessary):
\begin{itemize}
\item $\operatorname{vec}(AXB) = (B^\top \otimes A)\operatorname{vec}(X)$;
\item $(A \otimes B)(C \otimes D) = (AC) \otimes (BD)$;
\item $(A \otimes B)^\top = A^\top \otimes B^\top$;
\item $(A \otimes B)^{-1} = A^{-1} \otimes B^{-1}$.
\end{itemize}
\end{proof}

Corollary~\ref{cor: explicit T_S as a matrix} carries practical and theoretical implications. It translates the necessary and sufficient condition for moment assignability into a computationally tractable matrix condition $\operatorname{vec}(\Delta M) \in \mathcal{R}(\mathbf{T}_S)$.

Observing the structure of $\mathbf{H}$ in \eqref{eq: The H of moment transfer operator}, it possesses an explicit block-diagonal form $\mathbf{H} = \operatorname{blkdiag}(H_d, \dots, H_1) \in \mathbb{R}^{{p}\nu\times {m}\nu}$, where each block $H_i$ is a lower block-triangular matrix given by
\begin{equation}
H_i = 
\begin{bmatrix}
\eta_0(s_i) & ~ & ~ & ~ \\
-\eta_1(s_i) & \ddots & ~ & ~ \\
\vdots & \ddots & \eta_0(s_i) & ~ \\
(-1)^{\nu_i-1}\eta_{\nu_i - 1}(s_i) & \cdots & -\eta_1(s_i) &  \eta_0(s_i)
\end{bmatrix}.
\end{equation}
Due to this block-triangular structure, the rank of $\mathbf{H}$ is determined solely by its diagonal blocks, i.e., the $0$-moments $\eta_0(s_i)$.
It is known that if an eigenvalue $s_i$ of the signal generator coincides with a transmission zero of the plant, the matrix $\eta_0(s_i)$ becomes rank-deficient.
Consequently, this rank deficiency propagates to $\mathbf{H}$ and $\mathbf{T}_S$, thereby reducing the dimension of the assignable moment space.
In conclusion, under the condition that the plant is right-invertible at $\sigma(S)$ ,i.e., ${m \ge p}$ and $s_i\in\sigma(S)$ are not transmission zeros, the moment transfer operator is surjective. This guarantees that any desired closed-loop moment $M_\mathrm{des}$ can be assigned by a suitable choice of the compensator.
We do emphasize, however, even when some $s_i$’s are transmission zeros, $\Delta M \in \mathcal{R}(\mathcal{T}_S)$ can still be satisfied, which is simply enough for the moment assignment.

\section{Compensator for steady-state assignment}

\subsection{Canonical structure for moment assignment}

In this section, we introduce a canonical compensator structure that describes all the compensators assigning the closed-loop moment to $M_\mathrm{des}$.

According to the key observation made at the end of Section III-A, the necessary and sufficient condition for $M_\mathrm{des}$ to be assignable has been characterized in Section III-B as there exists a virtual signal generator $(S, M_\mathrm{c})$ yielding the open-loop moment at $(A, B, C, D)$ to be $M_\mathrm{des}$.
If this condition is met, on the other hand, the only job of the compensator, regardless of the specific choice of its internal parameters, is to transfer the moment $M_\mathrm{des}$ of the virtual signal generator $(S, M_\mathrm{des})$, corresponding to our system~\eqref{eq: LTI system with u and mu}, to the desired moment of the compensator $M_\mathrm{c}$, which then again acts as a virtual signal generator to the system~\eqref{eq: LTI system with u and mu}.

This perspective intuitively allows the full characterization of all the compensators assigning the closed-loop moment to $M_\mathrm{des}$, simply as all the system matching its open-loop moment to $M_\mathrm{c}$ at the virtual signal generator $(S, M_\mathrm{des})$.
Acknowledging this observation allows the use of conventional canonical structure for moment matching~\cite{astolfi2010model}, while we extend this with an additional freedom of choice as 
\begin{equation}\label{eq: compensator matching moment}
\begin{aligned}
\dot \xi_a &= (S - G_aM_\mathrm{des})\xi_a + F_{a} \xi_b + G_a u_\xi \\
\dot \xi_b &= -G_b M_\mathrm{des} \xi_a + F_{b} \xi_b + G_b u_\xi\\
y_\xi &= M_\mathrm{c} \xi_a + H_b\xi_b,
\end{aligned}
\end{equation}
parameterized by matrices $F_a, F_b, G_a, G_b$, and $H_b$, with the states $\xi_a \in \mathbb{R}^\nu$ and $\xi_b\in\mathbb{R}^{\rho - \nu}$, the input $u_\xi \in \mathbb{R}^{p}$, and the output $y_\xi \in \mathbb{R}^{m}$.

The architecture of the compensator consists of two parts to address dual objectives, moment assignment and stabilization.
The state $\xi_a$ constitutes a conventional canonical moment matching model~\cite{astolfi2010model} to match the moment $M_\mathrm{c}$ with respect to the virtual signal generator $(S, M_\mathrm{des})$.
On the other hand, $\xi_b$ represents redundant state variables for closed-loop stabilization while not altering the assigned moment.

The following theorem shows that, given $M_\mathrm{des}\in\mathbb{R}^{{p}\times\nu}$ and any proper $M_\mathrm{c}\in\mathbb{R}^{{m}\times \nu}$ achieving \eqref{eq: proper M_c}, the closed-loop moment of the proposed compensator~\eqref{thm: canonical compensator} is precisely $M_\mathrm{c}$, and hence the moment of the closed-loop system is assigned to $M_\mathrm{des}$.
Moreover, the theorem demonstrates that the proposed architecture parameterizes the entire family of compensators that assign the closed-loop moment to $M_\mathrm{des}$, up to pole-zero cancellations.

\begin{thm}\label{thm: canonical compensator}
Given matrix $M_\mathrm{des} \in \mathbb{R}^{{p}\times \nu}$ and the signal generator~\eqref{eq: Signal generator for mu}, consider the compensator~\eqref{eq: compensator matching moment} whose $M_\mathrm{c} \in \mathbb{R}^{{m}\times \nu}$ satisfies \eqref{eq: proper M_c}, assuming $\sigma(A_\mathrm{cl}) \cap \sigma (S) = \emptyset$.
Then, the closed-loop moment of the system~\eqref{eq: LTI system with u and mu} with the compensator~\eqref{eq: compensator matching moment} at $(S,L)$ is $M_\mathrm{des}$.
Furthermore, let $(\tilde F, \tilde G, \tilde H)$ be any minimally realized model of the order $\bar\rho$ such that its corresponding closed-loop matrix $\tilde{A}_\mathrm{cl}$ satisfies $\sigma(\tilde{A}_\mathrm{cl}) \cap \sigma(S) = \emptyset$ and it assigns the closed-loop moment of the system~\eqref{eq: LTI system with u and mu} to $M_\mathrm{des}$.
Then, there exists $\rho^*\in\mathbb{Z}_{\ge0}$ such that for all $\rho\ge\rho^*$, under the proper choice of $(F_a,F_b,G_a,G_b,H_b)$, $(\tilde F, \tilde G, \tilde H)$ is always algebraically equivalent to minimal realization of \eqref{eq: compensator matching moment}.
\end{thm}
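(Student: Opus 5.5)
The proof splits into two parts: sufficiency of the canonical form, and then the claim that it captures every minimal compensator that assigns $M_\mathrm{des}$. In both, I will lean on Theorem~\ref{thm: Closed-loop moment adjusted by compensator moment}.

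\emph{First part (sufficiency).} I exhibit the invariant manifold explicitly. Take the candidate $\Pi_\xi = [I_\nu \ \ 0]^\top$, i.e. $\xi_a=\omega$ and $\xi_b=0$, together with $\Pi_x := \mathcal{L}_{(S,A)}^{-1}(PL) + \mathcal{L}_{(S,A)}^{-1}(BM_\mathrm{c})$. The steps are:
\begin{itemize}
\item By construction, \eqref{eq: closed sylvester - a} holds, since $H\Pi_\xi = M_\mathrm{c}$.
\item By \eqref{eq: M_cl expansion} and \eqref{eq: proper M_c}, $C\Pi_x + DM_\mathrm{c} + QL = M_\mathrm{open} + \mathcal{T}_S(M_\mathrm{c}) = M_\mathrm{des}$.
\item Substituting into \eqref{eq: closed sylvester - b} in the form $\Pi_\xi S = F\Pi_\xi + G M_\mathrm{cl}$ with $M_\mathrm{cl}=M_\mathrm{des}$, the $\xi_a$-row reads $S = (S - G_aM_\mathrm{des}) + G_aM_\mathrm{des}$, and the $\xi_b$-row reads $0 = -G_bM_\mathrm{des} + G_bM_\mathrm{des}$. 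Both are identities, whatever $F_a,F_b,G_a,G_b,H_b$ are.
\end{itemize}
Since $\sigma(A_\mathrm{cl})\cap\sigma(S)=\emptyset$, the solution of \eqref{eq: Sylvester equation of the closed-loop system} is unique, so this pair is \emph{the} solution. Hence the closed-loop moment of the compensator is $M_\mathrm{c}$, and Theorem~\ref{thm: Closed-loop moment adjusted by compensator moment} gives $M_\mathrm{cl}=M_\mathrm{des}$.

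\emph{Second part (completeness).} Let $(\tilde F,\tilde G,\tilde H)$ be minimal of order $\bar\rho$ and assign $M_\mathrm{des}$. Let $(\tilde\Pi_x,\tilde\Pi_\xi)$ be its closed-loop Sylvester solution, and set $\tilde M_\mathrm{c}:=\tilde H\tilde\Pi_\xi$. By Theorem~\ref{thm: Closed-loop moment adjusted by compensator moment}, $\tilde M_\mathrm{c}$ satisfies \eqref{eq: proper M_c}, so I choose $M_\mathrm{c}:=\tilde M_\mathrm{c}$ in \eqref{eq: compensator matching moment}. Rewriting \eqref{eq: closed sylvester - b} gives $\tilde\Pi_\xi S = \tilde F\tilde\Pi_\xi + \tilde G M_\mathrm{des}$.

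The idea is to augment the compensator state with a copy of $\omega$ and change coordinates so that the copy becomes $\xi_a$. Concretely, take the non-minimal realization of order $\nu+\bar\rho$ that adds $\nu$ unobservable/uncontrollable states, and use the coordinates $\xi_a := \zeta$ (the copy of $\omega$) and $\xi_b := \tilde\xi - \tilde\Pi_\xi\zeta$. Choose the copy's dynamics as $\dot\zeta = (S-G_aM_\mathrm{des})\zeta + G_a u_\xi$ with $G_a$ free (for example $G_a=0$). Then
\begin{align*}
\dot\xi_b &= \tilde F\tilde\xi + \tilde G u_\xi - \tilde\Pi_\xi\dot\zeta\\
&= \tilde F\xi_b + (\tilde F\tilde\Pi_\xi - \tilde\Pi_\xi S + \tilde\Pi_\xi G_aM_\mathrm{des})\xi_a + (\tilde G - \tilde\Pi_\xi G_a)u_\xi .
\end{align*}
Using the Sylvester identity $\tilde F\tilde\Pi_\xi - \tilde\Pi_\xi S = -\tilde G M_\mathrm{des}$, the coefficient of $\xi_a$ becomes $-(\tilde G-\tilde\Pi_\xi G_a)M_\mathrm{des}$. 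This is exactly the form $-G_bM_\mathrm{des}$ with $G_b:=\tilde G-\tilde\Pi_\xi G_a$, and one also reads off $F_b:=\tilde F$ and $F_a:=0$.

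The output is $y_\xi = \tilde H\tilde\xi = \tilde H\tilde\Pi_\xi\xi_a + \tilde H\xi_b = M_\mathrm{c}\xi_a + H_b\xi_b$ with $H_b:=\tilde H$. The transfer function from $u_\xi$ to $y_\xi$ is unchanged by the augmentation, since $\zeta$ only enters via a coordinate change that is undone at the output. Therefore the minimal realization of this instance of \eqref{eq: compensator matching moment} is algebraically equivalent to $(\tilde F,\tilde G,\tilde H)$, which gives $\rho^*=\nu+\bar\rho$.

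For any $\rho\ge\rho^*$, I would pad $\xi_b$ with extra decoupled states that are uncontrollable and unobservable; this leaves the minimal realization unchanged.

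\emph{Main obstacle.} The delicate step is verifying that the transfer function is genuinely preserved. One must check that the $\xi_a$ block with $G_a\neq0$ does not feed a nonzero contribution to the output beyond what the change of variables cancels. The cleanest route is to take $G_a=0$: then $\xi_a$ is uncontrollable from $u_\xi$, so it can be dropped in the minimal realization, leaving exactly $(\tilde F,\tilde G,\tilde H)$ in the $\xi_b$ coordinates. A second point to check is that the auxiliary closed loop still satisfies $\sigma(A_\mathrm{cl})\cap\sigma(S)=\emptyset$, as the first part requires. Since the $\xi_a$ subsystem contributes the eigenvalues of $S$ when $G_a=0$, that hypothesis is only needed for the first part and not for the equivalence claim; the equivalence itself is at the level of input–output realization.
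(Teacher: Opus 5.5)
Your proof is correct. The first part is the same substitution-plus-uniqueness check the paper uses, but your completeness argument runs in the opposite direction from the paper's.

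The paper \emph{projects}. It splits $\tilde\xi$ along the range of $\tilde\Pi_\xi$, with $\bar\nu=\operatorname{rank}\tilde\Pi_\xi$, and reads $\bar F_{aa}$ and $\bar F_{ba}$ off the Sylvester equation through a right inverse of $\bar\Pi_\xi$. It then chooses $G_a=\bar\Pi_\xi^\dagger\bar G_a$ and $F_a=\bar\Pi_\xi^\dagger[\bar F_{ab}\ \ 0]$, so that the canonical instance reduces to the given compensator via $\bar\xi_a=\bar\Pi_\xi\xi_a$. The only hidden states are then the $\nu-\bar\nu$ directions of $\mathcal{N}(\bar\Pi_\xi)$ plus padding, and $\rho^*=\nu+\bar\rho-\bar\nu$.

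You instead \emph{lift}. You append an unobservable copy $\zeta$ of the exosystem and shift by the invariant manifold, $\xi_b=\tilde\xi-\tilde\Pi_\xi\zeta$. This gives explicit parameters $F_a=0$, $F_b=\tilde F$, $G_b=\tilde G-\tilde\Pi_\xi G_a$, $H_b=\tilde H$, with $G_a$ arbitrary, and needs no rank factorization or pseudo-inverse. What your route buys is simplicity and a transparent proof that $G_a$ is a genuinely free parameter. The price is the larger $\rho^*=\nu+\bar\rho$ and always $\nu$ hidden states. By contrast, when $\operatorname{rank}\tilde\Pi_\xi=\nu$ the paper's instance (with no padding) is itself similar to $(\tilde F,\tilde G,\tilde H)$; yours never is.

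Two smaller points on your ``main obstacle'' paragraph.

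\textbf{The $G_a\neq 0$ worry is unfounded.} In the coordinates $(\zeta,\tilde\xi)$, $\zeta$ never enters $\dot{\tilde\xi}$ or $y_\xi$. So it is unobservable, and the transfer function is $\tilde H(sI-\tilde F)^{-1}\tilde G$ for every $G_a$, not only for $G_a=0$.

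\textbf{The spectral hypothesis.} You are right that it is not part of the equivalence claim. Note, however, that the closed loop of your lifted instance is block-triangular in $(x,\tilde\xi,\zeta)$, with spectrum $\sigma(\tilde A_\mathrm{cl})\cup\sigma(S-G_aM_\mathrm{des})$. With your preferred $G_a=0$, it therefore contains $\sigma(S)$ as hidden closed-loop modes. The same happens for every $G_a$ when $(M_\mathrm{des},S)$ is not observable, e.g.\ in output regulation with $M_\mathrm{des}=0$. This does not affect the theorem as stated. It does make your representative a poor one for the paper's later ``without loss of generality'' restriction to the canonical form in the stabilization analysis, unless $(M_\mathrm{des},S)$ is observable and you pick $G_a$ placing $\sigma(S-G_aM_\mathrm{des})$ in the open left half-plane away from $\sigma(S)$.
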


\begin{proof}
The closed-loop moment of the compensator~\eqref{eq: compensator matching moment} is $M_\mathrm{c} \Pi_{\xi_a} + H_b \Pi_{\xi_b}$ where $\Pi_{\xi_a}\in\mathbb{R}^{\nu\times\nu}$ and $\Pi_{\xi_b}\in\mathbb{R}^{(\rho-\nu)\times\nu}$ satisfy
\begin{equation*}
\begin{bmatrix}
\Pi_x \\ \Pi_{\xi_a} \\ \Pi_{\xi_b}
\end{bmatrix} S
=
A_\mathrm{cl}
\begin{bmatrix}
\Pi_x \\ \Pi_{\xi_a} \\ \Pi_{\xi_b}
\end{bmatrix}  +  
\begin{bmatrix}
P \\ G_aQ \\ G_bQ
\end{bmatrix} L
\end{equation*}
where
\begin{equation*}
A_\mathrm{cl} =
\begin{bmatrix}
A & BM_C  & BH_b\\ G_aC & S - G_a(M_\mathrm{des}-DM_\mathrm{c}) & F_a + G_a D H_b \\ G_bC & -G_b (M_\mathrm{des} - DM_\mathrm{c}) & F_b + G_b D 
H_b
\end{bmatrix}.
\end{equation*}
By the spectrum condition, the above Sylvester equation has the unique solution and the solution is 
\begin{equation}\label{eq: solution of closed loop sylvester equation}
[(\mathcal{L}_{(S,A)}^{-1} (PL) + \mathcal{L}_{(S,A)}^{-1} (BM_\mathrm{c}))^\top \ \ I_\nu^\top \ \ 0_{(\rho-\nu)\times\nu}^\top]^\top.
\end{equation}
This could be easily shown by substituting the solution to the equation.
Therefore the closed-loop moment of the compensator~\eqref{eq: compensator matching moment} is $M_\mathrm{c}I_\nu + H_b0_{(\rho-\nu)\times\nu} = M_\mathrm{c}$.
Therefore, by Theorem~\ref{thm: Closed-loop moment adjusted by compensator moment}, the closed-loop moment of the system~\eqref{eq: LTI system with u and mu} with the compensator~\eqref{eq: compensator matching moment} at $(S,L)$ is $M_\mathrm{des}$.

Suppose minimally realized $\bar\rho$-order model
\begin{equation}\label{eq: any moment assigning compensator}
\begin{aligned}
\dot{\tilde{\xi}} &= \tilde F \tilde \xi + \tilde G u_\xi, & y_\xi = \tilde H \tilde \xi
\end{aligned}
\end{equation}
assigns closed-loop moment of the system~\eqref{eq: LTI system with u and mu} to $M_\mathrm{des}$.
According to Theorem~\ref{thm: Closed-loop moment adjusted by compensator moment} the closed-loop moment of the compensator~\eqref{eq: any moment assigning compensator} is $M_\mathrm{c}$.
By the definition, $M_\mathrm{c} = \tilde H \tilde \Pi_\xi$ where $\tilde\Pi_\xi \in \mathbb{R}^{\bar\rho\times\nu}$ satisfies
\begin{equation}\label{eq: Sylvester equation for compensator}
\begin{bmatrix}
\tilde \Pi_x \\ \tilde \Pi_\xi
\end{bmatrix}S
=
\begin{bmatrix}
A & B\tilde H \\ \tilde G C & \tilde F + \tilde G D\tilde H
\end{bmatrix}
\begin{bmatrix}
\tilde \Pi_x \\ \tilde \Pi_\xi
\end{bmatrix} + 
\begin{bmatrix}
P \\ \tilde G Q
\end{bmatrix}L.
\end{equation}
Note that $\tilde\Pi_x=\mathcal{L}_{(S,A)}^{-1}(PL + BM_\mathrm{c})$.
Let $\bar\nu := \operatorname{rank}(\tilde\Pi_\xi)$.
Then, there exist $T_\xi\in\mathbb{R}^{\bar\nu \times \bar\rho}$ and $V\in\mathbb{R}^{(\bar\rho-\bar\nu)\times \bar\rho}$ such that, for $\bar \Pi_\xi := T_\xi \tilde\Pi_\xi \in \mathbb{R}^{\bar\nu\times\nu}$, $\operatorname{rank}(\bar\Pi_\xi) = \bar \nu$ and $V\tilde\Pi_\xi = 0$.
Therefore, the matrix
\begin{equation}
T := \begin{bmatrix}
I_n & 0_{n \times \bar\rho} \\
0_{\bar\nu\times n} & T_\xi\\
0_{(\bar\rho - \bar\nu)\times n} & V
\end{bmatrix}
\end{equation}
is nonsingular.
Now, we define the coordinate transform $T [x^\top \ \tilde\xi^\top ]^\top =: [x^\top \ \bar \xi_a^\top \ \bar\xi_b^\top] ^\top\in \mathbb{R}^{n + \bar\nu + (\bar\rho-\bar\nu)}$, so that the dynamics is described by
\begin{equation}
\begin{aligned}
\dot x &= Ax + Bu + P\omega \\
y &= Cx + Du + Q\omega\\
\dot{\bar \xi}_a &=
\bar F_{aa}\bar \xi_a + \bar F_{ab}\bar\xi_b + \bar G_a y \\
\dot{\bar\xi}_b &=
\bar F_{ba}\bar \xi_a + \bar F_{bb}\bar\xi_b + \bar G_b y\\
u &= \bar H_a \bar{\xi}_a + \bar H_b \bar{\xi}_b.
\end{aligned}
\end{equation}
Then, \eqref{eq: Sylvester equation for compensator} is equivalent to
\begin{equation}\label{eq: Sylvester equation for any rho order model}
\begin{bmatrix}
\tilde\Pi_x \\ \bar\Pi_\xi \\ 0_{(\bar\rho-\bar\nu)\times\nu}
\end{bmatrix} S
=
\bar A_\mathrm{cl}
\begin{bmatrix}
\tilde\Pi_x \\ \bar\Pi_\xi \\ 0_{(\bar\rho-\bar\nu)\times\nu}
\end{bmatrix} +
\begin{bmatrix}
P \\ \bar G_a Q \\ \bar G_b Q
\end{bmatrix} L
\end{equation}
where
\begin{equation}
\bar A_\mathrm{cl}  =  
\begin{bmatrix}
A & B\bar H_a & B\bar H_b\\ \bar G_aC & \bar F_{aa} + \bar G_{a}D\bar H_{a} & \bar F_{ab} + G_{a}D\bar H_{b} \\ \bar G_bC & \bar F_{ba} + G_{b}D\bar H_{a} & \bar F_{bb} + G_{b}D\bar H_{b}
\end{bmatrix}.
\end{equation}
By the assumption, the closed-loop moment of the compensator $\bar H_a \bar\Pi_\xi + \bar H_b \cdot 0$ is $M_\mathrm{c}$, i.e., $\bar H_a \bar\Pi_\xi = M_\mathrm{c}$.
From the second and the last block rows, using $\tilde\Pi_x=\mathcal{L}_{(S,A)}^{-1}(PL + BM_\mathrm{c})$, one could obtain $\bar F_{aa} = \bar \Pi_\xi S \bar \Pi_\xi^\dagger - \bar G_a M_\mathrm{des}\bar\Pi_\xi^\dagger$ and $\bar F_{ba} = - \bar G_b M_\mathrm{des}\bar\Pi_\xi^\dagger$ where $\bar\Pi_\xi^\dagger$ denote the right inverse of $\bar\Pi_\xi$.
Therefore, $(\tilde F, \tilde G, \tilde H)$ is algebraically equivalent to
\begin{equation}\label{eq: Minimal realized compensator - transformed}
\begin{aligned}
\dot{\bar\xi}_a &= (\bar \Pi_\xi S \bar \Pi_\xi^\dagger - \bar G_a M_\mathrm{des}\bar\Pi_\xi^\dagger) \bar\xi_a + \bar F_{ab}\bar\xi_b + \bar G_a u_\xi \\
\dot{\bar\xi}_b &= - \bar G_b M_\mathrm{des}\bar\Pi_\xi^\dagger \bar\xi_a + \bar F_{ba}\bar\xi_b + \bar G_b u_\xi \\
y_\xi &= M_\mathrm{c}\bar \Pi_\xi^\dagger \bar\xi_a + \bar H_b\bar\xi_b,
\end{aligned} 
\end{equation}
which is minimal.
Finally, with any $\rho \ge (\bar \rho - \bar\nu) + \nu =:\rho^*$, letting
\begin{equation}
\begin{aligned}
F_a &= \bar \Pi_\xi^\dagger [\bar F_{ab} \ \ 0_{\bar\nu\times r}], & G_a &= \bar\Pi_\xi^\dagger \bar G_a, \\
F_b &= \begin{bmatrix}
\bar F_{bb} & 0 \\ 0 & F_{\mathrm{null}}
\end{bmatrix},
& G_b &= \begin{bmatrix}
\bar G_b \\ 0_{r\times m}
\end{bmatrix},\\
H_b &= [\bar H_b \ \ 0_{p\times r}],
\end{aligned}
\end{equation}
with $r = \rho - \rho^*\ge0$ and any real matrix $F_{\mathrm{null}} \in \mathbb{R}^{r \times r}$,
\eqref{eq: Minimal realized compensator - transformed} is the minimal realization of \eqref{eq: compensator matching moment}.
This completes the proof.
\end{proof}

Therefore, one can represent any moment assigning compensator \eqref{eq: compensator}, utilizing \eqref{eq: compensator matching moment} with sufficiently large $\rho$.

\subsection{Stabilizability}

Due to the discussion made in the previous section, let us confine ourselves now to the compensator of the form \eqref{eq: compensator matching moment} without loss of generality.
Then, the problem of whether the closed-loop stability can be achieved or not is equivalent to whether the matrix
\begin{equation}
A_\mathrm{cl} =
\begin{bmatrix}
A & BM_\mathrm{c}  & BH_b\\ G_aC & S - G_a(M_\mathrm{des}-DM_\mathrm{c}) & F_a + G_a D H_b \\ G_bC & -G_b (M_\mathrm{des} - DM_\mathrm{c}) & F_b + G_b D H_b
\end{bmatrix}
\end{equation}
can be made Hurwitz via appropriate choices of the free parameters $(F_a,F_b,G_a,G_b,H_b;\rho)$.

From the structure of the closed-loop matrix $A_\mathrm{cl}$, this problem can be equivalently formulated as the output feedback stabilization of the auxiliary augmented system
\begin{equation}\label{eq: Augmented system}{\small
\begin{aligned}
\begin{bmatrix}
\dot x \\ \dot \xi_a
\end{bmatrix} 
&= 
\underbrace{\begin{bmatrix}
A & BM_\mathrm{c} \\ G_a C & S-G_a(M_\mathrm{des} - DM_\mathrm{c})    
\end{bmatrix}}_{=: A_\mathrm{aug}}
\begin{bmatrix}
x \\ \xi_a
\end{bmatrix} + 
\underbrace{\begin{bmatrix}
B & 0 \\ G_a D & I_\nu
\end{bmatrix}}_{=:B_\mathrm{aug}} u \\
y &= \underbrace{
\begin{bmatrix}
C & -(M_\mathrm{des}-DM_\mathrm{c})
\end{bmatrix}
}_{=:C_\mathrm{aug}}
\begin{bmatrix}
x \\ \xi_a
\end{bmatrix}
\end{aligned}}
\end{equation}
with the output feedback controller
\begin{equation} \label{eq: output feedback compensator}
\dot \xi_b = (F_b + G_b D H_b)\xi_b + G_by, \quad u = 
\begin{bmatrix}
H_b \\ F_a
\end{bmatrix}\xi_b.
\end{equation}

Since \eqref{eq: output feedback compensator} can represent any LTI system (with an appropriate choice of $F_a$, $G_b$, $H_b$, and $\overline{F}_b := F_b + G_bDH_b$), it becomes straightforward that the problem of interest simplifies to whether the auxiliary system $(C_\mathrm{aug}, A_\mathrm{aug}, B_\mathrm{aug})$ can be made stabilizable and detectable via the choice of $G_a$.
The following lemmas dictate that, in fact, this is a systematic property that is independent with the choice of $G_a$.

\begin{lem}\label{lem: N&S condition ob stbl}
The pair $(A_\mathrm{aug},B_\mathrm{aug})$ is stabilizable, if and only if the pair $(A,B)$ is stabilizable.
\end{lem}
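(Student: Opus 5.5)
The plan is to apply the Popov--Belevitch--Hautus (PBH) test to the pair $(A_\mathrm{aug},B_\mathrm{aug})$ in \eqref{eq: Augmented system}. Recall that $(A_\mathrm{aug},B_\mathrm{aug})$ is stabilizable if and only if
\begin{equation*}
\operatorname{rank}\begin{bmatrix} A_\mathrm{aug}-\lambda I_{n+\nu} & B_\mathrm{aug}\end{bmatrix} = n+\nu \quad \text{for all } \lambda\in\mathbb{C}_{\ge 0}.
\end{equation*}
The analogous statement with $n$ in place of $n+\nu$ characterizes stabilizability of $(A,B)$. Writing the PBH matrix out explicitly gives
\begin{equation*}
\begin{bmatrix}
A-\lambda I_n & BM_\mathrm{c} & B & 0\\
G_aC & S-\lambda I_\nu - G_a(M_\mathrm{des}-DM_\mathrm{c}) & G_aD & I_\nu
\end{bmatrix}.
\end{equation*}

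The key step is to exploit the identity block $I_\nu$ in the last column block, which comes from the second input channel of $B_\mathrm{aug}$. Using column operations with that last block column, every entry of the second block row can be cleared without affecting the first block row, since the first row has a zero in that column block. This shows that the rank equals
\begin{equation*}
\nu + \operatorname{rank}\begin{bmatrix} A-\lambda I_n & BM_\mathrm{c} & B\end{bmatrix}.
\end{equation*}
Next, the columns of $BM_\mathrm{c}$ lie in $\mathcal{R}(B)$, so they can be eliminated using the $B$ columns. The rank therefore reduces to $\nu+\operatorname{rank}[A-\lambda I_n \ \ B]$.

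Hence the PBH matrix of the augmented pair has full row rank $n+\nu$ at some $\lambda$ if and only if $[A-\lambda I_n \ \ B]$ has full row rank $n$ at the same $\lambda$. Applying this for every $\lambda\in\mathbb{C}_{\ge0}$ gives both directions of the equivalence at once. As a by-product, the argument shows that the conclusion holds for every choice of $G_a$, $M_\mathrm{c}$ and $M_\mathrm{des}$, which is the ``systematic property'' claimed before the lemma.

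I do not expect any step to be a real obstacle. The only point needing care is that the column operations are legitimate for every complex $\lambda$, including $\lambda\in\sigma(S)$. They are, because they rely only on the identity block and never on invertibility of $S-\lambda I_\nu$. This is why no spectral assumption beyond the standing one is needed.
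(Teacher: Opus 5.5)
Your proposal is correct and takes essentially the same route as the paper: apply the PBH test to $[A_\mathrm{aug}-\lambda I \ \ B_\mathrm{aug}]$, then use column operations with the $I_\nu$ block, and with the $B$ columns to remove $BM_\mathrm{c}$, to show the rank equals $\nu+\operatorname{rank}[A-\lambda I \ \ B]$. Your write-up only spells out the elimination steps that the paper compresses into one sentence.
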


\begin{proof}
By the Popov-Belevich-Hautus test (PBH test)~\cite{trentelman2002control}, the pair $(A_\mathrm{aug},B_\mathrm{aug})$ is stabilizable, if and only if, for all $\lambda \in \mathbb{C}_{\ge0}$, the matirx
\begin{equation}
\begin{bmatrix}
A - \lambda I & B M_c & B &0 \\ G_a C & S-G_a(M_\mathrm{des}-DM_\mathrm{c}) - \lambda I & G_a D & I
\end{bmatrix}
\end{equation}
is full row rank.
Since performing elementary column operations does not adjust the rank, this is equivalent to
\begin{equation}
\operatorname{rank}
\begin{pmatrix}
A - \lambda I & 0 & B &0 \\ 0 & 0 & 0 & I\end{pmatrix} = n+\nu,
\end{equation}
for all $\lambda \in \mathbb{C}_{\ge0}$.
The equivalent condition for the full row rank is the stabilizability of the pair $(A,B)$.
\end{proof}

\begin{lem}\label{lem: N&S condition for detl}
Suppose $\sigma(A) \cap \sigma(S) = \emptyset$.
The pair $(C_\mathrm{aug},A_\mathrm{aug})$ is detectable,
if and only if, the pairs $(C,A)$ and $(M_\mathrm{open},S)$ are detectable.
\end{lem}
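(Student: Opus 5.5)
The plan is to apply the PBH test for detectability and to reduce the test matrix, by rank-preserving row and column operations, to a block form in which the conditions on $(C,A)$ and $(M_\mathrm{open},S)$ separate. Throughout, $M_\mathrm{c}$ is the compensator moment appearing in $A_\mathrm{aug}$ and $C_\mathrm{aug}$, so it satisfies \eqref{eq: proper M_c}, i.e., $\mathcal{T}_S(M_\mathrm{c}) = M_\mathrm{des} - M_\mathrm{open}$.

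By the PBH test, $(C_\mathrm{aug},A_\mathrm{aug})$ is detectable if and only if, for all $\lambda\in\mathbb{C}_{\ge0}$,
\begin{equation*}
\begin{bmatrix}
A-\lambda I & BM_\mathrm{c} \\ G_aC & S-G_a(M_\mathrm{des}-DM_\mathrm{c})-\lambda I \\ C & -(M_\mathrm{des}-DM_\mathrm{c})
\end{bmatrix}
\end{equation*}
has full column rank $n+\nu$.

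The reduction proceeds in four steps.
\begin{enumerate}
\item Subtract $G_a$ times the third block row from the second. This removes every occurrence of $G_a$ and leaves the second block row equal to $[0 \ \ S-\lambda I]$. This already shows that the property does not depend on $G_a$.
\item Use the Sylvester equation. Since $\sigma(A)\cap\sigma(S)=\emptyset$, define $\Pi := \mathcal{L}_{(S,A)}^{-1}(BM_\mathrm{c})$, so that $\Pi S = A\Pi + BM_\mathrm{c}$. Right-multiply by the invertible matrix $\begin{bmatrix} I & \Pi \\ 0 & I\end{bmatrix}$, which adds $\Pi$ times the first block column to the second.
\begin{itemize}
\item The $(1,2)$ block becomes $(A-\lambda I)\Pi + BM_\mathrm{c} = \Pi(S-\lambda I)$.
\item The $(2,2)$ block stays $S-\lambda I$.
\item The $(3,2)$ block becomes $C\Pi + DM_\mathrm{c} - M_\mathrm{des} = \mathcal{T}_S(M_\mathrm{c}) - M_\mathrm{des} = -M_\mathrm{open}$, by the definition of $\mathcal{T}_S$ and \eqref{eq: proper M_c}.
\end{itemize}
\item Subtract $\Pi$ times the second block row from the first. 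This clears the $(1,2)$ block and yields the equivalent test matrix
\begin{equation*}
\begin{bmatrix}
A-\lambda I & 0 \\ 0 & S-\lambda I \\ C & -M_\mathrm{open}
\end{bmatrix}.
\end{equation*}
\item Split into cases on $\lambda\in\mathbb{C}_{\ge0}$, using the disjointness of the spectra.
\begin{itemize}
\item If $\lambda\notin\sigma(A)\cup\sigma(S)$, the matrix trivially has full column rank.
\item If $\lambda\in\sigma(A)$, then $S-\lambda I$ is invertible. Column and row operations eliminate the $-M_\mathrm{open}$ block, and full column rank is equivalent to full column rank of $\begin{bmatrix}A-\lambda I\\ C\end{bmatrix}$.
\item If $\lambda\in\sigma(S)$, then $A-\lambda I$ is invertible. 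Symmetrically, full column rank is equivalent to full column rank of $\begin{bmatrix}S-\lambda I\\ M_\mathrm{open}\end{bmatrix}$.
\end{itemize}
\end{enumerate}
Taking all $\lambda\in\mathbb{C}_{\ge0}$ together, these are exactly the PBH conditions for detectability of $(C,A)$ and of $(M_\mathrm{open},S)$, which proves both directions.

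The main obstacle is step 2: choosing the similarity built from $\Pi=\mathcal{L}_{(S,A)}^{-1}(BM_\mathrm{c})$ so that the output block collapses to $-M_\mathrm{open}$ via \eqref{eq: proper M_c}. This is where the hypothesis $\sigma(A)\cap\sigma(S)=\emptyset$ enters, both to define $\Pi$ and to decouple the two spectral cases in step 4. The remaining operations are routine.
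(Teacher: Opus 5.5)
Your proposal is correct and follows essentially the paper's argument. It uses the same PBH test and the same row operation to remove $G_a$. It also uses the same $\Pi=\mathcal{L}_{(S,A)}^{-1}(BM_\mathrm{c})$ together with $\mathcal{T}_S(M_\mathrm{c})=M_\mathrm{des}-M_\mathrm{open}$ to turn the output block into $-M_\mathrm{open}$.

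The only difference is in how you carry it out. The paper substitutes $x_0=\hat\Pi\omega_0$ into a kernel vector and case-splits on whether $\omega_0=0$. You encode that substitution as the change of coordinates $\begin{bmatrix} I & \Pi\\ 0 & I\end{bmatrix}$, which block-diagonalizes the test matrix and gives both directions at once.
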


\begin{proof}
The pair $(C_\mathrm{aug},A_\mathrm{aug})$ is not detectable, if and only if, by the PBH test, for some $\lambda \in \mathbb{C}_{\ge0}$, there is a nonzero vector $\phi := [x_0^\top \ \omega_0^\top]^\top \in\mathbb{R}^{n+\nu}$ such that
\begin{equation}
\begin{bmatrix}
A - \lambda I & B M_\mathrm{c} \\ 0 & S -  \lambda I \\ C & -(M_\mathrm{des} - DM_\mathrm{c})
\end{bmatrix}
\begin{bmatrix}
x_0 \\ \omega_0
\end{bmatrix}
= 0
\end{equation}
where the row operation step is omitted for brevity but is straightforward.

If $\omega_0 = 0$, the block matrix equation reduces to $(A-\lambda I)x_0 = 0$ and $Cx_0 = 0$ for some $\lambda\in\mathbb{C}_{\ge0}$ and nonzero $x_0$.
This is the PBH condition for the pair $(C,A)$ being not detectable.

If $\omega_0 \neq 0$,
only investigating $\omega_0 \in \mathcal{N}(S-\lambda I)\setminus\{0\}$ where $\lambda\in\sigma(S)\cap\mathbb{C}_{\ge0}$ is sufficient to make $(S-\lambda I)\omega_0 = 0$.
From the first block row, we obtain
\begin{equation}
(A-\lambda I)x_0 + BM_\mathrm{c} \omega_0 = 0.    
\end{equation}
Due to the assumption $\sigma(A)\cap\sigma(S)=\emptyset$, $x_{0}=(\lambda I-A)^{-1}BM_\mathrm{c}\omega_{0}$.
Substituting this $x_0$ into the third block row yields
\begin{equation}\label{eq: eqiv. condition 1 for detectability}
C(\lambda I-A)^{-1}BM_\mathrm{c}\omega_{0} - (M_\mathrm{des}-DM_\mathrm{c})\omega_{0} = 0.    
\end{equation}
To figure out the meaning of~\eqref{eq: eqiv. condition 1 for detectability}, consider $\hat{\Pi} := \mathcal{L}_{(S,A)}^{-1}(BM_\mathrm{c})$, which is the unique solution to the Sylvester equation $\hat{\Pi} S = A \hat{\Pi} + BM_\mathrm{c}$.
Postmultiplying both sides by $\omega_0$ yields $\lambda\hat{\Pi}\omega_0 = A\hat{\Pi}\omega_0 + BM_\mathrm{c}\omega_0$, which directly implies $\hat{\Pi}\omega_0 = (\lambda I - A)^{-1}BM_\mathrm{c}\omega_0$.
Consequently, \eqref{eq: eqiv. condition 1 for detectability} is equivalent to
\begin{equation}
(C\hat{\Pi} + DM_\mathrm{c} - M_\mathrm{des})\omega_0 = 0.
\end{equation}
Recall that the assigned compensator moment $M_\mathrm{c}$ satisfies $M_\mathrm{des} - M_\mathrm{open} = C\hat{\Pi} + DM_\mathrm{c}$.
Substituting this into the above equation simplifies it entirely to $M_\mathrm{open}\omega_0 = 0$.
Combined with $(S-\lambda I)\omega_0 = 0$ for $\lambda \in \mathbb{C}_{\ge0}$ and $\omega_0 \neq 0$, this implies that the pair $(M_\mathrm{open}, S)$ is not detectable, by PBH test.
In conclusion, taking the contrapositive of both cases, the pair $(C_\mathrm{aug}, A_\mathrm{aug})$ is detectable if and only if both $(C,A)$ and $(M_\mathrm{open}, S)$ are detectable.
\end{proof}

The necessary and sufficient condition in Lemma~\ref{lem: N&S condition for detl} provides a physical insight into the detectability of the augmented system.
Mathematically, it implies that no unstable or oscillatory mode of the signal generator can reside entirely within the null space of the open-loop moment $M_\mathrm{open}$.
If such a shared null space exists, the open-loop plant would completely block the transmission of this specific mode, leaving no trace in the steady-state output.
Because the proposed dynamic compensator solely relies on the measured output $y$ to construct a stabilizing feedback, it would be inherently blind to this internal excitation.
Consequently, the augmented system loses detectability, making output-feedback stabilization impossible.
Furthermore, in the context of output regulation where $M_\mathrm{des} = 0$, this condition elegantly captures the classical limitation that stabilization fails when the disturbance frequency coincides with a transmission zero of the plant.

\begin{thm}\label{thm: Stabilizabilty of A_cl with assignment}
Suppose that $\sigma(A) \cap \sigma (S) = \emptyset$.
There exists a compensator of the form~\eqref{eq: compensator matching moment} such that stabilizes the closed-loop system, if and only if
$(A,B)$ is stabilizable, and both $(C,A)$ and $(M_\mathrm{open},S)$ are detectable.
\end{thm}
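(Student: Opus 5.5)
The proof reduces the existence of a stabilizing compensator of the form~\eqref{eq: compensator matching moment} to output-feedback stabilization of the augmented system~\eqref{eq: Augmented system} by the controller~\eqref{eq: output feedback compensator}, and then invokes Lemmas~\ref{lem: N&S condition ob stbl} and~\ref{lem: N&S condition for detl}. First I would justify the reduction carefully. After reordering the states as $(x,\xi_a,\xi_b)$, the closed-loop matrix $A_\mathrm{cl}$ is exactly the closed loop of $(A_\mathrm{aug},B_\mathrm{aug},C_\mathrm{aug})$ with the controller $\dot\xi_b=\bar F_b\xi_b+G_b y$, $u=[H_b^\top\ F_a^\top]^\top\xi_b$, where $\bar F_b:=F_b+G_bDH_b$. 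The map $(F_a,F_b,G_b,H_b)\mapsto(\bar F_b,G_b,[H_b^\top\ F_a^\top]^\top)$ is onto all strictly proper LTI controllers of any order $\rho-\nu$, since one can solve for $F_b=\bar F_b-G_bDH_b$. Hence, for a fixed $G_a$, a stabilizing choice of the remaining parameters exists iff $(A_\mathrm{aug},B_\mathrm{aug},C_\mathrm{aug})$ is stabilizable by strictly proper dynamic output feedback. By the classical result (e.g.~\cite{trentelman2002control}), this holds iff $(A_\mathrm{aug},B_\mathrm{aug})$ is stabilizable and $(C_\mathrm{aug},A_\mathrm{aug})$ is detectable.

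\textbf{Sufficiency.} Assume $(A,B)$ is stabilizable and $(C,A)$, $(M_\mathrm{open},S)$ are detectable. Fix any $G_a$, for instance $G_a=0$, and any $M_\mathrm{c}$ satisfying~\eqref{eq: proper M_c}, which the theorem implicitly presumes exists. Lemma~\ref{lem: N&S condition ob stbl} gives stabilizability of $(A_\mathrm{aug},B_\mathrm{aug})$, and Lemma~\ref{lem: N&S condition for detl} gives detectability of $(C_\mathrm{aug},A_\mathrm{aug})$. An observer-based controller, with $K$ such that $A_\mathrm{aug}+B_\mathrm{aug}K$ is Hurwitz and $L_o$ such that $A_\mathrm{aug}+L_oC_\mathrm{aug}$ is Hurwitz, then has $\xi_b\in\mathbb{R}^{n+\nu}$ and makes $A_\mathrm{cl}$ Hurwitz. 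Its realization gives $\bar F_b=A_\mathrm{aug}+B_\mathrm{aug}K+L_oC_\mathrm{aug}$, $G_b=-L_o$, and output matrix $K$; from these I read off $H_b$, $F_a$ and $F_b$. Since $A_\mathrm{cl}$ is Hurwitz and $S$ has to have eigenvalues in $\mathbb{C}_{\ge0}$ for the detectability condition on $(M_\mathrm{open},S)$ to matter, I need $\sigma(A_\mathrm{cl})\cap\sigma(S)=\emptyset$. This holds whenever $\sigma(S)\subset\mathbb{C}_{\ge0}$. If $S$ has stable modes, I would perturb the pole placement so that the closed-loop spectrum avoids $\sigma(S)$, which is generic, so Theorem~\ref{thm: canonical compensator} still applies and the moment stays assigned.

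\textbf{Necessity.} Suppose some choice of $(F_a,F_b,G_a,G_b,H_b;\rho)$ makes $A_\mathrm{cl}$ Hurwitz. Then the plant $(A_\mathrm{aug},B_\mathrm{aug},C_\mathrm{aug})$ with that $G_a$ is stabilized by a dynamic output feedback. This forces stabilizability and detectability of that triple: an unstable uncontrollable or unobservable mode of the plant remains an eigenvalue of the closed loop, which can be shown by a PBH eigenvector argument on the block matrix. Lemmas~\ref{lem: N&S condition ob stbl} and~\ref{lem: N&S condition for detl} then transfer these properties back to $(A,B)$ stabilizable and $(C,A)$, $(M_\mathrm{open},S)$ detectable. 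Both lemmas hold for every $G_a$ and every admissible $M_\mathrm{c}$.

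The step I expect to need the most care is the reduction itself. One point is that the detectability analysis in Lemma~\ref{lem: N&S condition for detl} uses the row operation eliminating $G_a$, which is valid because $C_\mathrm{aug}$ appears as the output. The other is the spectral-disjointness bookkeeping needed to keep the moment assignment intact while placing poles.
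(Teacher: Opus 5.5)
Your argument follows the paper's own route. $A_\mathrm{cl}$ is the closed loop of the augmented plant $(A_\mathrm{aug},B_\mathrm{aug},C_\mathrm{aug})$ under the unrestricted strictly proper controller~\eqref{eq: output feedback compensator}. By the separation principle, a Hurwitz $A_\mathrm{cl}$ therefore exists if and only if that plant is stabilizable and detectable, and Lemmas~\ref{lem: N&S condition ob stbl} and~\ref{lem: N&S condition for detl} turn this into the stated conditions. For the theorem as written, which only asks for a Hurwitz $A_\mathrm{cl}$, this is correct in both directions.

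The one thing to fix is your side step on spectral disjointness. The theorem does not need it, and it is false as you state it.

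Detectability of $(M_\mathrm{open},S)$ still allows an eigenvalue $\lambda\in\sigma(S)$ with $\mathfrak{R}(\lambda)<0$ whose eigenvector $\omega_0$ satisfies $M_\mathrm{open}\omega_0=0$. The computation in the proof of Lemma~\ref{lem: N&S condition for detl} is valid for every $\lambda\in\sigma(S)$. It shows that $[(\hat\Pi\omega_0)^\top\ \omega_0^\top\ 0]^\top$ is then an eigenvector of $A_\mathrm{cl}$ with eigenvalue $\lambda$ for every choice of $(F_a,F_b,G_a,G_b,H_b;\rho)$. The reason is that the $G_a$-row and $G_b$-row both multiply $C\hat\Pi\omega_0-(M_\mathrm{des}-DM_\mathrm{c})\omega_0=-M_\mathrm{open}\omega_0=0$. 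So $\lambda\in\sigma(A_\mathrm{cl})\cap\sigma(S)$ no matter how you perturb the pole placement.

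A scalar instance is $A=-1$, $B=C=L=1$, $D=P=Q=0$, $S=-2$, $M_\mathrm{des}=1$. Here $M_\mathrm{open}=0$, $M_\mathrm{c}=-1$, all hypotheses of the theorem hold, and $-2$ is always a closed-loop eigenvalue.

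This costs nothing for the theorem, and the steady-state claim also survives. The matrix $[\Pi_x^\top\ I_\nu\ 0]^\top$ from the proof of Theorem~\ref{thm: canonical compensator} solves the closed-loop Sylvester equation with no spectral hypothesis. Hence, once $A_\mathrm{cl}$ is Hurwitz, the deviation $e$ from that manifold obeys $\dot e=A_\mathrm{cl}e$ and $y-M_\mathrm{des}\omega\to0$.

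If you do want $\sigma(A_\mathrm{cl})\cap\sigma(S)=\emptyset$, so that $M_\mathrm{cl}$ is defined in the paper's sense, you must strengthen the hypothesis to observability of $(M_\mathrm{open},S)$. Then every uncontrollable or unobservable mode of the augmented plant lies in $\sigma(A)$, and your genericity argument goes through.
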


The proof follows directly from standard linear system theory, hence it is omitted.
By the separation principle, the closed-loop system matrix $A_\mathrm{cl}$ formulated by the canonical compensator structure \eqref{eq: compensator matching moment} can be made Hurwitz, if and only if the augmented pairs $(A_\mathrm{aug},B_\mathrm{aug})$ and $(C_\mathrm{aug},A_\mathrm{aug})$ associated with \eqref{eq: Augmented system} are stabilizable and detectable, respectively.
This establishes the exact necessary and sufficient condition for the existence of the stabilizing and moment-assigning compensator.
The explicit computation of the stabilizing gains is detailed in the subsequent constructive synthesis procedure.

Provided that $\sigma(A) \cap \sigma(S) = \emptyset$, the pair $(A,B)$ is stabilizable, and both $(C,A)$ and $(M_\mathrm{open}, S)$ are detectable, the proposed compensator \eqref{eq: compensator matching moment} can be systematically synthesized using the following procedure:
\begin{enumerate}
\item \textbf{Moment Assignment:} Compute the required compensator moment matrix $M_\mathrm{c}$ by solving the linear algebraic equation $\mathcal{T}_S(M_\mathrm{c}) = M_\mathrm{des} - M_\mathrm{open}$.
\item \textbf{Closed-loop Stabilization:} Choose any $G_a$. Then, determine the remaining parameters $(F_a,F_b,G_b,H_b;\rho)$ by designing a standard output feedback controller \eqref{eq: output feedback compensator} which stabilizes the augmented system \eqref{eq: Augmented system}.
\end{enumerate}

\section{Numerical Example}

We consider the example of tracking control of the NASA HiMAT aircraft model~\cite{safonov2003feedback} of the form of~\eqref{eq: LTI system with u and mu} with the disturbances driven by the signal generator of the form of~\eqref{eq: Signal generator for mu} where
\begin{equation}{\small
\begin{aligned}
&A = \begin{bmatrix} -0.0226 & -36.6 & -18.9 & -32.1 & 3.25 & -0.76 \\ 9.3\text{e-}5 & -1.90 & 0.983 & -7.3\text{e-}4 & -0.17 & -0.005 \\ 0.0123 & 11.7 & -2.63 & 8.8\text{e-}4 & -31.6 & 22.4 \\ 0 & 0 & 1 & 0 & 0 & 0 \\ 0 & 0 & 0 & 0 & -30 & 0 \\ 0 & 0 & 0 & 0 & 0 & -30 \end{bmatrix}, \\
&B = \begin{bmatrix} 0 & 0 & 0 & 0 & 30 & 0 \\ 0 & 0 & 0 & 0 & 0 & 30 \end{bmatrix}^\top, \quad C = \begin{bmatrix} 0 & 1 & 0 & 0 & 0 & 0 \\ 0 & 0 & 0 & 1 & 0 & 0 \end{bmatrix},\\
&S = \begin{bmatrix} 0 & 0 & 0 \\ 0 & 0 & 3 \\ 0 & -3 & 0 \end{bmatrix},\quad P = \begin{bmatrix} 0 & 1 & 1 & 0 & 0 & 0 \\ 0 & 1 & 0 & 0 & 0 & 0 \\ 0 & 0 & 1 & 0 & 0 & 0 \end{bmatrix}^\top, \\
&L = I_3,\quad Q = 0.\\
\end{aligned}}
\end{equation}
The signal generator $S$ is configured to represent a mixture of a constant wind and an oscillatory gust with a frequency of 3 rad/s, and the disturbance matrix $P$ is set to reflect a vertical gust affecting the angle of attack and the pitching moment.
The control objective is to perfectly reject the constant disturbance while attenuating the oscillatory disturbance, rather than eliminating it completely, therefore we let 
\begin{equation}
M_\mathrm{des} = \begin{bmatrix} 0 & 0.1 & 0 \\ 0 & 0 & 0.1 \end{bmatrix}.
\end{equation}

We selected arbitrary $G_a$ and designed using LQG to determine the remaining parameters $F_a$, $F_b$, $G_b$, and $H_b$.

The simulation results can be found in Figure~\ref{fig: HiMAT}.
Despite of the disturbances, it is shown that proposed compensator eliminated the constant bias, and the oscillations converge precisely to the assigned trajectory.
These results validate that the proposed framework can successfully enforce an arbitrary steady-state behavior determined by the designer, during rejecting the disturbances.
\begin{figure}[t]
    \centering
    \includegraphics[width=\linewidth]{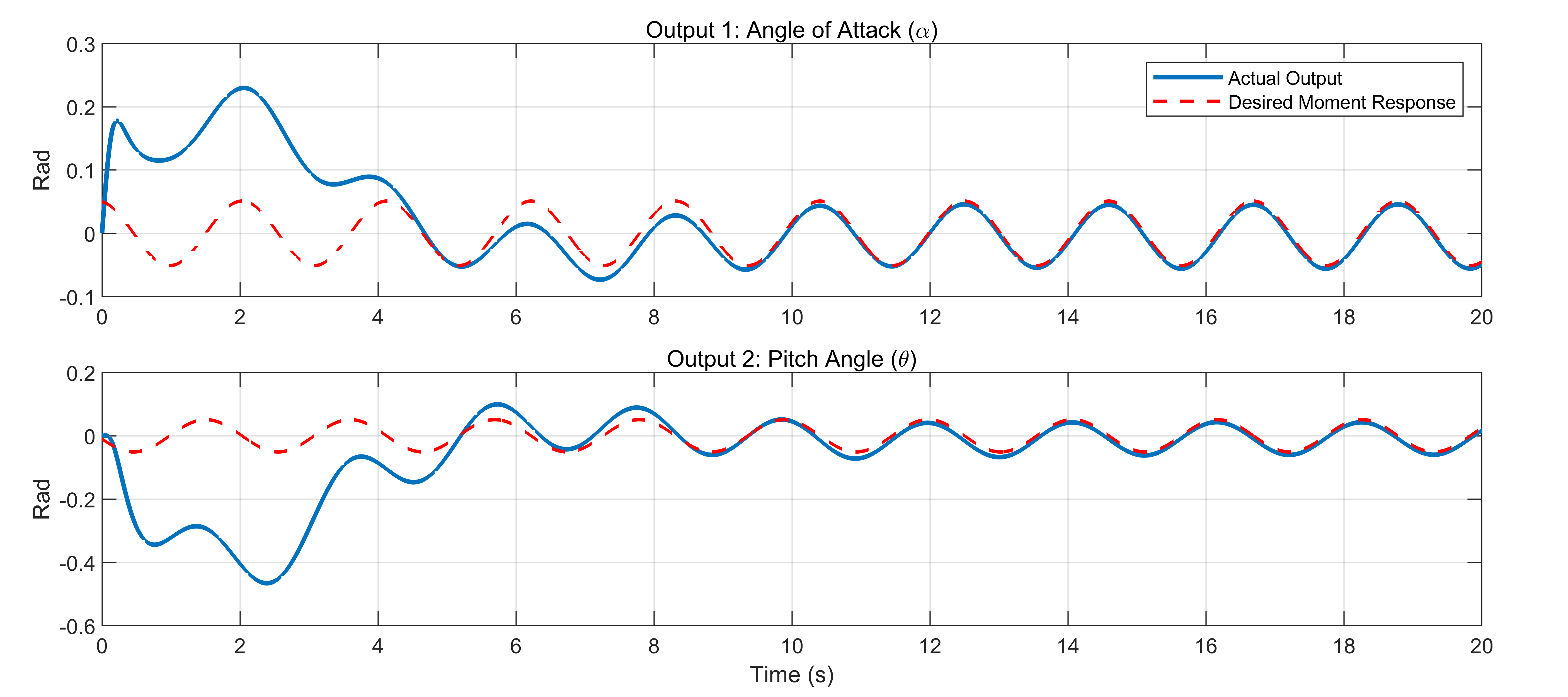}
    \caption{Disturbance rejection and tracking control of HiMAT aircraft model}
    \label{fig: HiMAT}
\end{figure}

\section{Conclusion}
In this paper, we have investigated the problem of moment assignment and stabilization, offering a novel perspective on the disturbance rejection and tracking problem through the lens of moment.
We established the equivalent conditions for moment assignability and formalized this concept by introducing the notion of the moment transfer operator.
Furthermore, we parameterized the universal structure of the compensator, effectively capturing the class of all possible compensators that assign the closed-loop moment to a desired target.
Finally, we derived the necessary and sufficient condition for the internal stabilizability of the closed-loop system and provided a systematic synthesis procedure to construct a stabilizing and moment assigning compensator.

The proposed framework conceptually unifies existing control paradigms.
It covers closed-loop interpolation by setting $M_\mathrm{des} = M_\mathrm{open}$ and classical output regulation by targeting a zero moment, i.e., $M_\mathrm{des} = 0$.
Furthermore, it provides an intuitive interpretation of output regulation.
By reformulating the regulator equations through the language of moment, their physical meaning is now evident: they merely represent the algebraic computation of a proper compensator moment $M_\mathrm{c}$ required to assign the desired moment $M_\mathrm{des}$.
By reducing the existence condition of a stabilizing compensator to the detectability of the pair $(M_\mathrm{open},S)$, we establish a physical intuition for stabilizability, making it straightforward to determine whether steady-state response assignment is feasible.
Parameterization of every controller achieving output regulation is also provided by this framework.

As moment can be extracted from input-output data without requiring full knowledge of the state-space matrices~\cite{scarciotti2017data}, this work naturally paves a road to data-driven output regulation.
Since the notion of moments has already been rigorously generalized beyond linear systems, this framework provides a foundation for a generalized approach to steady-state assignment in nonlinear, hybrid, and stochastic systems.

\bibliographystyle{IEEEtran}
\bibliography{IEEEabrv, ref}

\begin{thebibliography}{10}
\providecommand{\url}[1]{#1}
\csname url@samestyle\endcsname
\providecommand{\newblock}{\relax}
\providecommand{\bibinfo}[2]{#2}
\providecommand{\BIBentrySTDinterwordspacing}{\spaceskip=0pt\relax}
\providecommand{\BIBentryALTinterwordstretchfactor}{4}
\providecommand{\BIBentryALTinterwordspacing}{\spaceskip=\fontdimen2\font plus
\BIBentryALTinterwordstretchfactor\fontdimen3\font minus \fontdimen4\font\relax}
\providecommand{\BIBforeignlanguage}[2]{{%
\expandafter\ifx\csname l@#1\endcsname\relax
\typeout{** WARNING: IEEEtran.bst: No hyphenation pattern has been}%
\typeout{** loaded for the language `#1'. Using the pattern for}%
\typeout{** the default language instead.}%
\else
\language=\csname l@#1\endcsname
\fi
#2}}
\providecommand{\BIBdecl}{\relax}
\BIBdecl

\bibitem{astolfi2010model}
A.~Astolfi, ``Model reduction by moment matching for linear and nonlinear systems,'' \emph{IEEE Trans. Autom. Control}, vol.~55, no.~10, pp. 2321--2336, 2010.

\bibitem{scarciotti2016model}
G.~Scarciotti and A.~Astolfi, ``Model reduction for hybrid systems with state-dependent jumps,'' \emph{IFAC-PapersOnLine}, vol.~49, no.~18, pp. 850--855, 2016.

\bibitem{scarciotti2021moment}
G.~Scarciotti and A.~R. Teel, ``On moment matching for stochastic systems,'' \emph{IEEE Trans. Autom. Control}, vol.~67, no.~2, pp. 541--556, 2021.

\bibitem{moreschini2024closed}
A.~Moreschini and A.~Astolfi, ``Closed-loop interpolation by moment matching for linear and nonlinear systems,'' \emph{IEEE Trans. Autom. Control}, vol.~70, no.~5, pp. 2918--2933, 2024.

\bibitem{bellman1997introduction}
R.~Bellman, \emph{Introduction to matrix analysis}.\hskip 1em plus 0.5em minus 0.4em\relax Philadelphia, PA, USA: SIAM, 1997.

\bibitem{huang2004nonlinear}
J.~Huang, \emph{Nonlinear output regulation: theory and applications}.\hskip 1em plus 0.5em minus 0.4em\relax Philadelphia, PA, USA: SIAM, 2004.

\bibitem{antoulas2005approximation}
A.~C. Antoulas, \emph{Approximation of large-scale dynamical systems}.\hskip 1em plus 0.5em minus 0.4em\relax Philadelphia, PA, USA: SIAM, 2005.

\bibitem{shakib2023time}
M.~F. Shakib, G.~Scarciotti, A.~Y. Pogromsky, A.~Pavlov, and N.~van~de Wouw, ``Time-domain moment matching for multiple-input multiple-output linear time-invariant models,'' \emph{Automatica}, vol. 152, no. 110935, 2023.

\bibitem{graham2018kronecker}
A.~Graham, \emph{Kronecker products and matrix calculus with applications}.\hskip 1em plus 0.5em minus 0.4em\relax New York, NY, USA: Dover, 2018.

\bibitem{trentelman2002control}
H.~L. Trentelman, A.~A. Stoorvogel, and M.~Hautus, \emph{Control theory for linear systems}.\hskip 1em plus 0.5em minus 0.4em\relax New York, NY, USA: Springer, 2001.

\bibitem{safonov2003feedback}
M.~Safonov, A.~Laub, and G.~Hartmann, ``Feedback properties of multivariable systems: The role and use of the return difference matrix,'' \emph{IEEE Trans. Autom. Control}, vol.~26, no.~1, pp. 47--65, 2003.

\bibitem{scarciotti2017data}
G.~Scarciotti and A.~Astolfi, ``Data-driven model reduction by moment matching for linear and nonlinear systems,'' \emph{Automatica}, vol.~79, pp. 340--351, 2017.

\end{thebibliography}

\end{document}